\newtheorem{observation}{Observation}[section]
\newtheorem{reduction}{Reduction}
\newcommand{\newreptheorem}[2]{%
	\newenvironment{rep#1}[1]{%
	\expandafter\renewcommand\csname the#1\endcsname{\ref{##1}}%
	\begin{#1}}%
	{\end{#1}%
	\addtocounter{#2}{-1}}}
\newcommand{\defcal}[1]{\expandafter\newcommand\csname c#1\endcsname{{\mathcal{#1}}}}
\newcommand{\defbb}[1]{\expandafter\newcommand\csname b#1\endcsname{{\mathbb{#1}}}}
\newcounter{calBbCounter}
    \edef\letter{\Alph{calBbCounter}}
\newcommand{\ie}{{\it i.e.}}
\newcommand{\eg}{{\it e.g.}}
\newcommand{\OMAP}{{\textsf{OMAPD}}}
\begin{document}

\title{\Large  $O(\mbox{depth})$-Competitive Algorithm for Online Multi-level Aggregation}
\author
{
Niv	Buchbinder\thanks{Department of Statistics and Operations Research, School of Mathematical Sciences, Tel Aviv university, Israel. Email: \texttt{niv.buchbinder@gmail.com}. Research is supported by ISF grant 1585/15 and US-Israel BSF grant 2014414.}
\and
Moran Feldman\thanks{Department of Mathematics and Computer Science, The Open University of Israel.
Email: \texttt{moranfe@openu.ac.il}. Research supported in part by ISF grant 1357/16.}
\and
Joseph (Seffi)	Naor\thanks{Computer Science Department, Technion, Israel. Email: \texttt{naor@cs.technion.ac.il}. Research is supported by ISF grant 1585/15 and US-Israel BSF grant 2014414.}
\and
Ohad Talmon\thanks{Computer Science Department, Technion, Israel. Email: \texttt{ohad@cs.technion.ac.il}. Research is supported by ISF grant 1585/15 and US-Israel BSF grant 2014414.}
}
\maketitle

\begin{abstract}\small\baselineskip=9pt
We consider a multi-level aggregation problem in a weighted rooted tree, studied recently by Bienkowski et al.~\cite{BBBCDF15}. In this problem requests arrive over time at the nodes of the tree, and each request specifies a deadline. A request is served by sending it to the root before its deadline at a cost equal to the weight of the path from the node in which it resides to the root. However, requests from different nodes can be aggregated, and served together, so as to save on cost. The cost of serving an aggregated set of requests is equal to the weight of the subtree spanning the nodes in which the requests reside. Thus, the problem is to find a competitive online aggregation algorithm that minimizes the total cost of the aggregated requests. This problem arises naturally in many scenarios, including multicasting, supply-chain management and sensor networks. It is also related to the well studied TCP-acknowledgement problem and the online joint replenishment problem.

We present an online $O(D)$-competitive algorithm for the problem, where $D$ is the depth, or number of levels, of the aggregation tree. This result improves upon the $D^2 2^D$-competitive algorithm obtained recently by Bienkowski et al.~\cite{BBBCDF15}.

\medskip
\noindent \textbf{Keywords:} online algorithms, competitive analysis, aggregation of requests
\end{abstract}

%\thispagestyle{empty}
%\pagenumbering{Alph}
%\newpage
%\setcounter{page}{1}

%\pagenumbering{arabic}

\section{Introduction}

Aggregation of tasks is a fundamental tool in optimization, utilized in various areas. Suppose that there is a
set of requests that need to be served, where serving several requests together costs less than serving each request individually.
There are {\em aggregation constraints} on the request system that specify which sets of requests can be jointly served, together with a function that determines the cost of serving any aggregated set of requests.
For example, serving all requests together might be impossible or very costly. In general, various cost functions and aggregation constraints give rise to a large family of interesting problems. %that we are interested in.

%This happens, for example, when there is a resource needed for serving the requests which has a (large) setup cost, and its cost is independent of the number of requests using it.
Scenarios where aggregation is beneficial are common in supply-chain management, in which producing (or delivering) several demands together can be cheaper than producing them separately \cite{CW73,K97,AJR89,WW04}. Another example is aggregating control packets in communication networks (\eg, sensor networks)~\cite{DGS98,BS00,BC01,KKR01}. In this setting, transmitting together several packets going to a joint destination is, again, cheaper than transmitting each packet separately. Aggregation problems were studied extensively for many settings, both in the offline case, in which all requests are known a-priori, and in the online case in which requests arrive over time.

We study a very general online aggregation setting, considered recently by~\cite{BBBCDF15}, and known as the
\textsf{Online Multi-level Aggregation Problem with Deadlines}. In this problem requests arrive over time, and each request specifies a deadline for serving it.
Thus, each request is associated with a time interval in which it needs to be served. At any point of time $t$, the online algorithm is only aware of requests whose arrival time
is no later than $t$.

There is a tree with non-negative edge costs rooted at a node $r$, and each request is assumed to reside at some tree node.
The cost of serving a single request is equal to the cost of the tree path from the node where it resides to the root $r$.
%$[a,d]$, where $a$ is its arrival time and $d$ is its deadline.
Requests whose intervals intersect in time can be aggregated and served together. %The common resource In this problem defined via a rooted tree with non-negative edge costs.
The aggregate cost of a set of requests is equal to the cost of the subtree rooted at $r$
and spanning the nodes where these requests reside. The goal is to minimize the total {\em service cost}, \ie, the sum of the costs of the subtrees used for serving the requests.

The online multi-level aggregation problem with deadlines was studied recently by Bienkowski et al.~\cite{BBBCDF15} who designed a $D^2 2^D$-competitive algorithm for it, where $D$ denotes the depth of the tree. We note that this bound is quite far from the best known lower bound on the competitive factor for the problem which is only $2$~\cite{BBCJNS14} (see Section~\ref{ssc:related_work} for more details). In the offline case, the problem is known to be NP-hard (and APX-hard) even when the aggregation tree is of depth $2$~\cite{AJR89,NS09,BBCDN15}. Currently, the best (offline) approximation factor known for the problem is 2~\cite{BMVKSS09,BBBCDF15}.

\subsection{Our Results.}
Our main result is a competitive algorithm for the \textsf{Online Multi-level Aggregation Problem with Deadlines}.
We have already mentioned that there is a large gap between the lower bound of $2$ on the competitive ratio known and the current best upper bound of $D^2 2^D$ \cite{BBBCDF15}, which is exponential in the depth $D$ of the aggregation tree. We obtain a substantial improvement over the upper bound of~\cite{BBBCDF15}.%, proving the following.

\begin{theorem}\label{thm:node_weighted_general}
There exists an $O(D)$-competitive algorithm for the \textsf{Online Multi-level Aggregation Problem with Deadlines}.
\end{theorem}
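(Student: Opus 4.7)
My plan is to design a deterministic online algorithm whose cost can be charged against \textsf{OPT} in a level-by-level manner, losing only $O(1)$ per tree level. The starting point is a primal-dual formulation: write a covering-type LP whose primal variables $x_{v,t}$ indicate the fractional extent to which the edge from $v$ to its parent is ``purchased'' at time $t$, together with constraints saying that every request must be routed to the root before its deadline. Dual variables are associated with requests and grow continuously at unit rate as long as the request is unserved. An online primal-dual scheme, in which primals are raised in response to dual growth, yields a fractional online solution whose cost is within a constant factor of \textsf{OPT}.

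The integral algorithm is then obtained by an online rounding rule. For each edge $e$, I would accumulate the fractional mass placed on $e$ since the last time it was ``activated''; once this mass reaches $1$, activate $e$ and serve every currently pending request in the subtree below $e$, paying, if necessary, for ancestor edges so as to route to the root. In addition, whenever a request's deadline is imminent I force an activation of every edge on that request's path to the root. The number of natural integral activations of $e$ is at most the fractional mass placed on $e$, so the rounded cost at each level $\ell$ is $O(1)$ times the fractional cost at that level; summing over the $D$ levels then yields the claimed $O(D)$-competitive ratio.

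The main obstacle is the interaction between levels when a service is actually performed. Activating a low-level edge requires routing the pending requests through potentially many ancestor edges that have not yet accumulated enough mass to be naturally activated, so we are forced to pay for them ``out of order.'' To prevent a compounding blowup --- which seems to be what drives the exponential $D^2 2^D$ bound of~\cite{BBBCDF15} --- the fractional algorithm must be crafted so that whenever an edge $e$ accumulates mass, its ancestors accumulate a proportional amount of mass as well; then, when $e$ fires, its ancestors are close enough to firing that the forced extra cost is at most a constant factor of the cost already paid at lower levels. Formalizing this coupling between levels, and verifying that deadline-induced forced activations can be absorbed into the same charging scheme (by showing that an imminent deadline necessarily generates sufficient dual growth on the path to the root), is the delicate part, and is precisely where the improvement from $D^2 2^D$ to $O(D)$ has to come from.
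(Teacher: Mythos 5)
Your proposal is a plan rather than a proof, and the central difficulty you identify at the end is left entirely open: you observe that to get $O(D)$ instead of something exponential, ``the fractional algorithm must be crafted so that whenever an edge $e$ accumulates mass, its ancestors accumulate a proportional amount of mass as well,'' and that ``formalizing this coupling... is precisely where the improvement has to come from.'' That is exactly the theorem being asked for, so what you have is a restatement of the problem in primal-dual language, not a solution. Concretely, the missing pieces are: (i) a specific LP and a dual-growth rule for which you can prove an $O(1)$ fractional competitive ratio (you assert this is achievable but give no constraint set and no feasibility/boundedness argument); (ii) a rounding rule with the cross-level coupling property you need, together with a proof that each level's rounded cost is $O(1)$ times that level's fractional cost; and (iii) an argument that the deadline-forced transmissions are absorbed by the charging scheme --- you gesture at ``sufficient dual growth on the path to the root,'' but this is exactly the kind of claim that can fail, since a request deep in the tree with an imminent deadline may force payment on high-cost ancestor edges that have accumulated almost no fractional mass.

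For comparison, the paper takes a completely different, LP-free route. After reducing to node costs and then to a forest of $3$-decreasing trees (so that node costs drop geometrically along root-to-leaf paths), it transmits a tree only when some request reaches its deadline. It builds each transmission tree by a recursive budgeting scheme: the root gets budget $\hat c(r) = 2c(r)$, each node $u$ spends roughly $\hat c(u)$ to absorb the most urgent requests in its subtree, and splits its budget proportionally among the nodes it added. Two lemmas then do all the work: (a) each transmitted tree costs at most $O(D)\cdot c(r)$, because each of the at most $D{+}1$ levels of the budget recursion accounts for at most $\hat c(r)$; and (b) every transmission strictly decreases the cost of the optimal solution on the \emph{remaining} requests by at least $c(r)$, via a ``reconstruction'' argument showing a subset of the optimum's tree of $\hat c$-value $2c(r)$ (hence cost at least $c(r)$) is never re-paid. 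Multiplying $\ell \cdot O(D)\cdot c(r)$ against the resulting lower bound $\ell\cdot c(r)$ on OPT gives $O(D)$. The geometric decay from $3$-decreasing trees is what lets the budgets stay controlled across levels --- it plays the role your unproved ``coupling between levels'' was supposed to play, and it is obtained \emph{structurally}, by a reduction, rather than by designing the dynamics of a fractional algorithm. You would need an idea of comparable strength to make the primal-dual plan go through.
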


\paragraph{Our Techniques.} Standard reductions allow us to assume that the costs are associated with the tree nodes rather than with the edges. To simplify the presentation, we first present our result for a special kind of aggregation tree known as \emph{3-decreasing} trees. In a $3$-decreasing tree, as we traverse any path from the root $r$ to a leaf, the edge costs go down by a factor of at least $3$ at each step of the path. The idea of our algorithm for $3$-decreasing trees is simple and intuitive. Suppose there is a request that has reached its deadline (and, thus, must be served). The algorithm recursively aggregates requests into a subtree (which eventually will be served), starting from the root of the tree.
%request that has just reached its deadline.
The aggregation process has to balance between two conflicting objectives. On the one hand, it is beneficial to aggregate as many requests as possible, especially those approaching their deadline. On the other hand, it is necessary to bound the cost of the aggregate subtree, so that it can be related to the optimal solution. We balance between the two by recursively providing budgets to tree nodes that are added to the aggregate subtree. A budget provided to node $u$ (that has joined the subtree) can be used to aggregate an additional (yet restricted) set of nodes into the subtree rooted at $u$. These nodes are then also given budgets, which can be used to aggregate additional nodes (recursively).

The analysis of the algorithm is performed by first showing that the total cost of each served subtree is at most $O(D \cdot c(r))$, where $D$ is the depth of the tree and $c(r)$ is the cost of the root $r$. Then, we show that the optimal cost of serving the remaining requests (including requests that at this point of time have not arrived yet) decreases by at least $c(r)$. The two claims together imply the desired bound.

Extending our result to general trees is done via a reduction that transforms a tree into a forest of $3$-decreasing trees. Then, we apply the algorithm for $3$-decreasing trees to each tree of the resulting forest. A very simple argument shows that the algorithm obtained this way has a competitive factor of at most $O(D^2)$. However, using a more involved analysis we are able to show that the two above claims hold also for general trees, which give us the promised $O(D)$-competitiveness guarantee.

\subsection{Related Work.} \label{ssc:related_work}

Several interesting problems are related to the \textsf{Online Multi-level Aggregation Problem with Deadlines}.
 %come in several interesting. In a generalization of the problem known as \textsf{Online Multi-level Aggregation Problem}
Suppose that each request accrues a {\em waiting cost} over time, rather than having a strict deadline. The cost of satisfying a set of requests is then defined to be the sum of both the service cost and the sum of the waiting costs of the requests.
For this problem Bienkowski et al.~\cite{BBBCDF15} designed an $O(D^4 2^D)$-competitive algorithm. The best offline approximation for this variant is $2+\epsilon$~\cite{P16} adapting ideas from \cite{LRS06}. We note that a common choice for the waiting cost is a linear function over time.
This model (with slight variations) was considered by Khanna et al. \cite{KNR02} who achieved a competitive factor which is logarithmic in the
weight of the aggregation tree.

One of the earliest problems considered in the general setting of aggregation is the {\em TCP-Acknowledgment Problem}. In this
problem there is a single link over which packets need to be acknowledged by control messages. Any number of control messages
can be aggregated into a single packet sent over the single link. The aggregation tree, thus, consists of a single edge. The objective function is composed of two terms: the number of acknowledgments sent over the link and the sum of the waiting times of the control messages.
The optimal competitive factors for the TCP-acknowledgment problem are $2$ for deterministic algorithms and $e/(e-1)$ if randomization is allowed~\cite{DGS98,KKR01,BJN07}. Interestingly, the TCP-acknowledgment problem is equivalent to the classical {\em Lot Sizing Problem}~\cite{DGS98,WW04},
which has been studied extensively by the operations research community.

The aggregation problem for trees of depth $2$ is known as the {\em Joint Replenishment Problem} in supply-chain management. The best competitive factor known for this special case is $3$ with general waiting times, and $2$ if requests have deadlines~\cite{BKV12,BKLMS13,BBCJNS14}. The best lower bound for general waiting times is 2.754 \cite{BBCJNS14} improving upon an earlier bound of 2.64 by \cite{BKLMS13}. For the deadline case the best lower bound is $2$ \cite{BBCJNS14}. The (offline) Joint Replenishment problem is known to be NP-hard (and APX-hard)~\cite{AJR89,NS09,BBCDN15}.
The best approximation ratio currently known for it is $1.791$~\cite{BBCJNS14}, which improves upon previous results of \cite{LRS05,LS06,LRS06,LRSS08}.

Finally, when the aggregation tree is an infinite half line, Bienkowski et al.~\cite{BBBCDF15} gave a $4$-competitive algorithm for the deadline version of the problem, and showed that this is the best possible. For the more general problem with waiting costs, Bienkowski et al.~\cite{BCJSS13} showed that the competitive ratio is between $2+\phi\approx 3.618$ and $5$, improving on an earlier upper bound of $8$ for the problem by Brito et al.~\cite{BKV12}. It is also known that the optimal offline solution for this case can be computed efficiently~\cite{BCJSS13}.

%******************************
%
%As stated the current best offline approximation for \textsf{Online Multi-level Aggregation Problem with Deadlines} is 2 \cite{BMVKSS09}. For the more general problem with waiting costs the best bound is $2+\epsilon$  \cite{28, LRS06}  **** unclear if we can use [28].
%
%P16
%Some motivating examples: [8,3,17,32,27,12,20,21]

%The TCP Acknowledgement problem is one of the earliest examples ...
%The JRP problem is yet another example ...
%
%Description of problem on multi-level trees ... weights on edges or vertices ...
%either deadline or waiting time or more complicated function.
%
%Defined in full generality by Bienkowski et al for different wait functions
%More general when the cost function is submodular
%
%Online and offline, comp. factor and approximation algorithms
%Few sample citations of papers on the problem

\section{Problem Definition and Preliminaries}\label{sec:pre}

An instance of the \textsf{Online Multi-level Aggregation Problem with Deadlines} (\OMAP) is defined as a tuple $(\cT, \cI)$. The first component of the tuple is a tree $\cT$ rooted at some node $r(\cT)$ with non-negative edge costs. We denote by $c(e)$ the cost of an edge $e$ of $\cT$, and by $D(\cT)$ the \emph{depth} of $\cT$, \ie, the maximum number of edges along any path from the root of $\cT$ to a leaf. The second component of the tuple $(\cT, \cI)$ is a set of time intervals $\cI$, where each interval $I = [a_I, d_I] \in \cI$ is associated with a node $w_I \in \cT$, an arrival time $a_I$ and a deadline $d_I$. A solution for the problem is a sequence of subtrees $T_1, T_2, \ldots, T_{\ell}\subseteq \cT$ rooted at $r(\cT)$ that are transmitted at times $t_1, t_2, \ldots, t_{\ell}$.\footnote{Notice that we abuse notation here and treat trees as sets of nodes. This is done occasionally throughout the paper.} The solution is feasible if, for each interval $I\in \cI$, there exists a tree $T_i$ with transmission time $t_i\in [a_I, d_I]$ containing the node $w_I$. Moreover, we say that a tree of the solution having these properties \emph{services} $I$. The cost of transmitting a tree $T$ is denoted by $c(T) \triangleq \sum_{e\in E[T_i]}c(e)$, where $E[T_i]$ is the set of edges of $T_i$. The cost of the solution is $\sum_{i=1}^{\ell}c(T_i)$.

In the online setting the tree $\cT$ is known to the algorithm in advance, but the intervals of $\cI$ arrive in an online fashion, \ie, each interval $I \in \cI$ is revealed to the online algorithm only upon its arrival time at $a_I$. The online algorithm can decide at every given time to transmit a subtree of $\cT$ at this time, however, both the decision to transmit the subtree and the choice of the subtree to transmit must be done without any knowledge about future intervals. The algorithm is $\alpha$-competitive if it produces a feasible solution whose total cost is always at most $\alpha$ times the cost of the optimal solution. In the online setting it is often useful to call intervals that have already been revealed to the algorithm, but were not serviced yet, by the term \emph{active}.

To simplify the presentation of our algorithm, we first modify the problem as described by the following two modifications.

\begin{enumerate}
\item We assume the tree $\cT$ has node costs $\{c(u) \mid u \in \cT\}$ rather than edge costs. Consequently, the transmission cost of a subtree $T \subseteq \cT$ now becomes $c(T) = \sum_{u\in T} c(u)$.
\item For each node $u$, we assume $c(u)$ is strictly positive.
\end{enumerate}

Proving that these modifications are without loss of generality can be done in a rather standard way. First, we may assume that $r(\cT)$ has only a single child. One may observe that the transmission of a subtree $T$ containing multiple children of $r(\cT)$ can always be replaced with the transmission of multiple subtrees, each containing $r(\cT)$ and the part of $T$ descending from a single child of $r(\cT)$. Moreover, this replacement does not change the cost of the transmission. Hence, there is no loss in applying any algorithm independently to every subtree of $\cT$ (more accurately, every instance of the algorithm faces an instance of {\OMAP} with a tree containing $r(\cT$) and all the nodes descending in $\cT$ from a single child of $r(\cT)$). Next, we observe that in {\OMAP} every interval associated with the root $r(\cT)$ can always be serviced without any cost. Thus, we may assume that our instance does not contain any such intervals. We now transform $\cT$ into a node-cost tree by moving the cost of every edge to its end point which is further away from the root, and then removing the root $r(\cT)$ itself (which is the only node left with no cost). Finally, if some node $u$ has cost zero, we may merge it with the next node on the path to the root. Thus, each node has strictly positive cost.
It is easy to see that any solution for the original edge weighted instance can be transformed to a solution for the resulting node weighted instance with the same cost and vice-versa. Moreover, the depth of the node weighted instance is always smaller than the depth of the original edge weighted instance. From now on we abuse notation and refer to the variant of {\OMAP} obtained by our modifications simply as {\OMAP}.

Finally, a tree $\cT$ is called \emph{3-decreasing} if, as we go along any path from the root $r(\cT)$ to a leaf, the cost of each node is smaller than the cost of the previous node by a factor of at least $3$. This definition is used in~\cite{BBBCDF15}, and it is similar to the weighted version of HSTs defined in~\cite{BBMN15}. Note, however, that a $3$-decreasing tree is not exactly an (un-weighted) $3$-HST (as defined by~\cite{Ba96,Ba98,FRT04}) as the costs of the edges along any root to leaf path of a $3$-decreasing tree decrease by factor of \emph{at least} $3$ rather than by exactly $3$. Moreover, different root to leaf paths of a $3$-decreasing tree might differ in their lengths.

\section{Algorithm for \texorpdfstring{$3$}{3}-Decreasing Trees} \label{sec:algorithm_decreasing}

%\SetKwFor{At}{at}{do}{end}
\begin{algorithm*}[t!]
\caption{\textsf{Transmission Tree Section}} \label{alg:online}
\DontPrintSemicolon
	%Let $r$ be the root of a $3$-decreasing tree $\cT$ containing a mature interval (that has reached its deadline). \\
    %The output is a transmission tree $T$ rooted at $r$.
    Initiate a transmission tree $T \gets \{r\}$.\\
    Let $Q$ be a queue of tuples to be processed. Initially, $Q$ contains only the tuple $(r, \hat{c}(r)=2c(r))$.\\
    \While{$Q\neq \varnothing$}
    {
    Dequeue the next tuple $(u, \hat{c}(u))$ from $Q$, and set $\cA_u \gets \varnothing$.\\
	\While{there are active intervals associated with nodes in $\cT_u \setminus T$ and $c(\cA_u) \leq \hat{c}(u) / 2$ \label{line:u_loop}}
	{
        Let $I$ be the interval with the earliest deadline among the active intervals in $\cT_u \setminus T$ (breaking ties arbitrarily).\label{line:interval_selection}\\
        Add to $\cA_u$ all nodes on the path from $w_I$ to $r$ that are not already in $T$.\\
     }
     For every $v\in \cA_u$, enqueue into $Q$ the tuple $\left(v, c(v) \cdot \frac{\hat{c}(u)}{c(\cA_u)}\right)$.\\ %\tcc*[r]{If $c(\cA_u) = 0$, then we assume $\frac{\hat{c}(u)}{c(\cA_u)}$ evaluates to $0$.}
	 Add the nodes of $\cA_u$ to $T$.	
}
Transmit $T$.
\end{algorithm*}

In this section we present and analyze an algorithm for {\OMAP} on $3$-decreasing trees. A generalization of this algorithm for general trees is given in Section~\ref{sec:algorithm_general}. Throughout the section we consider a fixed instance $(\cT, \cI)$ of {\OMAP} in which the tree $\cT$ is a $3$-decreasing tree. %The output of the algorithm is a sequence of transmission subtrees of $\cT$.
Since we have fixed the instance of {\OMAP}, we can use in this section a somewhat simplified notation. More specifically, we drop the parameter from the notations $r(\cT)$ and $D(\cT)$ and use $r$ and $D$ to denote the root of $\cT$ and its depth, respectively. % Throughout the section we assume $(\cT, \cI)$ is an instance of {\OMAP} with a $3$-decreasing tree $\cT$.
We also need to define some new notation. Given a set $U$ of nodes we use $c(U)$ to denote the total cost of the nodes within it, \ie, $c(U) = \sum_{u \in U} c(u)$. Additionally, given a node $u$ we denote by $\cT_u$ the subtree of $\cT$ rooted at $u$.

We remind the reader that an interval is said to be active if and only if it has already appeared, but has not been serviced yet. Our algorithm transmits a tree whenever an active interval {\em matures}, \ie, reaches its deadline. More specifically, our algorithm invokes Algorithm~\ref{alg:online} whenever an active interval matures, and Algorithm~\ref{alg:online} then selects a subtree $T$ of $\cT$ and transmits it. Following the transmission of $T$, our algorithm returns to its idle state until another active interval matures. We note that there might be multiple active intervals that reach maturity at the same time. When this happens, the transmission of $T$ might not service all of these intervals, which might result in immediate additional invocation of Algorithm~\ref{alg:online}. In other words, there might be a zero time gap between consecutive executions of Algorithm~\ref{alg:online}.

Informally, Algorithm~\ref{alg:online} starts the construction of the tree to be transmitted by assigning a budget of $\hat{c}(r) = 2c(r)$ to the root of the tree. This budget is then used to recursively add new nodes to the transmission tree (and thus, serve the intervals residing in these nodes).
More specifically, each node $u$ that is assigned a budget uses it to add to the transmission tree new nodes belonging to its subtree $\cT_u$. The total transmission cost of these newly added nodes is roughly equal to the budget of $u$; and following the addition of these new nodes $u$ splits its budget proportionally among them  so that this process can be repeated recursively. It is important to note that, when a node $u$ chooses nodes from its subtree $\cT_u$ to add to the transmission tree, it gives priority to satisfying intervals whose deadline is sooner (and thus are, intuitively, more ``urgent'').

The following observation shows that each transmission of Algorithm~\ref{alg:online} makes progress, and thus, our algorithm, as a whole, terminates.

\begin{observation}
Algorithm~\ref{alg:online} halts and the transmission tree $T$ contains an active interval.
\end{observation}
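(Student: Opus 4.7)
The plan is to verify the two assertions separately. For halting, I would track the queue $Q$ and establish the invariant that every node of $\cT$ is inserted into $Q$ at most once. A node $v \neq r$ is enqueued only when it belongs to some $\cA_u$, and by construction $\cA_u$ only collects nodes that are not in $T$ at the start of processing $u$; because $\cA_u$ is merged into $T$ before the next dequeue, $v \in T$ thereafter and can never be selected into any $\cA_{u'}$ again. Combined with the single initial insertion of $(r,2c(r))$, this caps the number of outer iterations by $|\cT|$. For the inner loop, each iteration that passes the test selects an $I$ whose node $w_I$ lies in $\cT_u \setminus T$ and, under the natural reading of the algorithm, is not yet in $\cA_u$, so $w_I$ is freshly added to $\cA_u$; hence the inner loop performs at most $|\cT_u|$ iterations before either exhausting candidate intervals or violating the budget inequality $c(\cA_u) \leq \hat{c}(u)/2$.

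For the second claim, I would invoke the fact that Algorithm~\ref{alg:online} is triggered precisely when some active interval $I^*$ matures, together with the preprocessing observation from Section~\ref{sec:pre} that intervals associated with $r$ can be serviced at no cost and are discarded at the outset. Consequently $w_{I^*} \in \cT_r \setminus \{r\}$, so during the very first outer iteration (with $u=r$ and $T=\{r\}$) the inner-loop condition is satisfied by $I^*$ itself, and the budget condition $c(\cA_r) \leq \hat{c}(r)/2 = c(r)$ holds vacuously since $\cA_r = \varnothing$. The inner loop therefore executes at least once, appending the whole $r$-to-$w_I$ path (for some active $I$ with earliest deadline) into $\cA_r$, after which $T \gets T \cup \cA_r$ ensures $w_I \in T$. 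Thus the transmitted tree services at least one active interval.

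The main subtle point I anticipate is justifying inner-loop progress: the phrase ``active intervals associated with nodes in $\cT_u \setminus T$'' does not literally exclude nodes already placed in $\cA_u$, but picking such an interval would re-add the same nodes without changing $\cA_u$ as a set, making termination impossible. The clean interpretation is that the greedy rule selects the earliest-deadline interval whose node lies in $\cT_u \setminus (T \cup \cA_u)$; under this reading strict progress of the inner loop is immediate and, combined with the at-most-once insertion into $Q$ established above, yields halting.
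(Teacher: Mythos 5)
Your proof follows the same approach as the paper's (very short) proof: show that the interval-selection line is reached at least once, so some active interval's path ends up in $T$. You additionally supply a halting argument (each node enters $Q$ at most once, bounding the outer loop; the inner loop must add a fresh node each iteration), which the paper omits entirely, and you correctly flag and resolve a genuine ambiguity in the inner-loop condition — as written it tests membership in $\cT_u \setminus T$ while $T$ is only updated after the loop, so the intended reading is indeed to exclude nodes already in $T \cup \cA_u$, otherwise the loop need not terminate.

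One inaccuracy worth correcting: the preprocessing in Section~\ref{sec:pre} discards intervals at the \emph{original edge-weighted} root $r(\cT)$, which is then \emph{removed} from the tree; the new node-weighted root $r$ (the former unique child of $r(\cT)$) has strictly positive cost and may perfectly well host intervals. So your inference that $w_{I^*} \in \cT_r \setminus \{r\}$ is not justified. If $I^*$ sits at $r$ and no other active interval resides below $r$, the inner loop never runs and your argument stalls. The observation still holds in that case, but for a different and more trivial reason: $r$ is unconditionally in every transmitted tree, so $I^*$ is serviced regardless. Your proof needs this case split. (The paper's own proof, which flatly asserts that Line~\ref{line:interval_selection} is always reached, has the same unaddressed corner case.)
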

\begin{proof}
Observe that Line~\ref{line:interval_selection} of Algorithm~\ref{alg:online} is guaranteed to be reached at least once in every given execution of Algorithm~\ref{alg:online}. This line selects an active interval $I$, and the nodes of the path from $r$ to $w_I$ are later added to $T$.
\end{proof}

We are now ready to present the two main lemmata necessary for the analysis of the algorithm. The first of these lemmata bounds the cost paid by the algorithm for every single transmission. The proof of this lemma is deferred to Section~\ref{ssc:alg-pay}.

\begin{lemma}\label{lem:alg-pay}
For every subtree $T$ of $\cT$ transmitted by the algorithm, $c(T) \leq 2(D+1) \cdot c(r)$.
\end{lemma}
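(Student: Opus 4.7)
}

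The plan is to partition the nodes of the transmitted tree $T$ by ``generation'' according to when they were enqueued. Let $L_0 = \{r\}$, and for $i \geq 1$ let $L_i = \bigcup_{u \in L_{i-1}} \cA_u$, the set of nodes added while processing tuples produced in the previous generation. Because every element of $\cA_u$ is a strict descendant of $u$, the set $L_i$ lies entirely at depth $\geq i$, so $L_i = \varnothing$ for $i > D$. Thus $c(T) = \sum_{i=0}^{D} c(L_i)$, and it suffices to establish that $c(L_i) \leq 2c(r)$ for every $i$, which will give $c(T) \leq 2(D+1) c(r)$.

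To bound $c(L_i)$ I would rely on two facts that I would prove first. The first is \emph{budget preservation}: defining $B_i = \sum_{u \in L_i} \hat{c}(u)$, one has $B_0 = \hat{c}(r) = 2 c(r)$ and $B_i \leq B_{i-1}$, since when $(u, \hat{c}(u))$ is processed the total budget it forwards to its children is $\sum_{v \in \cA_u} c(v) \cdot \hat{c}(u)/c(\cA_u) = \hat{c}(u)$ when $\cA_u \neq \varnothing$, and $0$ otherwise. Hence $B_i \leq 2c(r)$ for all $i$. The second is a \emph{per-tuple bound}: $c(\cA_u) \leq \hat{c}(u)/2 + c(u)/2$ for every processed tuple. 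Indeed, the inner loop only iterates while $c(\cA_u) \leq \hat{c}(u)/2$, so just before the last iteration this inequality holds; and the last iteration adds only nodes on the path from some $w_I \in \cT_u \setminus T$ to $r$ that are not yet in $T$. Since $u$ (and all its ancestors) already lie in $T$ when its tuple is processed, these newly added nodes are strict descendants of $u$ lying along a single root-to-leaf path in $\cT$, and the $3$-decreasing property bounds their total cost by $c(u)/3 + c(u)/9 + \cdots \leq c(u)/2$.

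Combining the two facts yields the recurrence
\[
c(L_i) \;=\; \sum_{u \in L_{i-1}} c(\cA_u) \;\leq\; \tfrac{1}{2} B_{i-1} + \tfrac{1}{2} c(L_{i-1}) \;\leq\; c(r) + \tfrac{1}{2} c(L_{i-1}).
\]
Since $c(L_0) = c(r) \leq 2c(r)$, a one-line induction gives $c(L_i) \leq 2c(r)$ for every $i$, completing the proof.

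The step I expect to need the most care is the per-tuple bound $c(\cA_u) \leq \hat{c}(u)/2 + c(u)/2$. This requires verifying that $u$ itself (and in fact every ancestor of $u$) already belongs to $T$ by the time $(u, \hat{c}(u))$ is dequeued, so that the ``new'' nodes introduced in a single inner iteration really do form a sub-path of a root-to-leaf path in $\cT_u$ strictly below $u$, allowing the geometric bound from the $3$-decreasing property to apply. The remaining subtleties, such as handling the case where $\cA_u = \varnothing$ or where the inner loop halts due to exhaustion of active intervals rather than the budget threshold, are minor and fit within the same inequality.
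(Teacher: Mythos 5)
Your proof is correct and follows essentially the same strategy as the paper: you use the identical level/generation decomposition of $T$, the identical per-tuple bound $c(\cA_u) \leq \tfrac{1}{2}\left(\hat{c}(u) + c(u)\right)$ (the paper's Lemma~\ref{lem:A_u_bound}, proved via the same $3$-decreasing geometric sum), and the identical budget-conservation identity $\sum_{v \in \cA_u} \hat{c}(v) = \hat{c}(u)$. The only difference is bookkeeping: the paper first establishes the per-node inequality $c(u) \leq \hat{c}(u)$ by induction (Lemma~\ref{lem:c_upper_bound}) and then sums $\hat{c}$ over levels (Observation~\ref{obs:level_cost}), whereas you aggregate that same induction into a per-level recurrence $c(L_i) \leq \tfrac{1}{2}B_{i-1} + \tfrac{1}{2}c(L_{i-1})$, and both routes yield $c(L_i) \leq 2c(r)$ on each of the $D+1$ levels and hence $c(T) \leq 2(D+1)c(r)$.
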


The presentation of the other main lemma requires some additional notation. Assume the online algorithm transmits overall $\ell$ trees $T_1, \ldots, T_{\ell}$. For every $0 \leq i \leq \ell$, let $\cI_{i}$ be the set of intervals that were not yet serviced by the algorithm after it has made $i$ transmissions. Note that $\cI_{i}$ includes also intervals that arrive after the transmission time of tree $T_i$, \ie, they were not active yet when $T_i$ was transmitted.
By definition, $T_i$ services all the intervals $\cI_{i-1} \setminus \cI_{i}$. Additionally, we use $OPT(\cI')$ to denote an optimal solution for a set of intervals $\cI' \subseteq \cI$ (more formally, $OPT(\cI')$ is an optimal solution for the instance $(\cT, \cI')$ of {\OMAP}).

Our second main lemma can now be stated as follows. We defer its proof to Section~\ref{ssc:opt-pay}.

\begin{lemma}\label{lem:opt-pay}
$OPT(\cI_{i}) \leq OPT(\cI_{i - 1}) -c(r)$ for every $1 \leq i \leq \ell$.
\end{lemma}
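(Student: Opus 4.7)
The plan is to build, from an optimal solution $\Pi = OPT(\cI_{i-1})$, a feasible solution $\Pi'$ for $\cI_i$ with $c(\Pi') \leq c(\Pi) - c(r)$; together with $OPT(\cI_i) \leq c(\Pi')$ this yields the lemma. Let $I^*$ denote the interval whose maturation triggered the $i$-th transmission, so $d_{I^*} = t_i$ and $I^* \in \cI_{i-1} \setminus \cI_i$. A first observation I would record is that, by feasibility of the algorithm, every $J \in \cI_{i-1}$ satisfies $d_J \geq t_i$ --- otherwise $J$ would have matured before $t_i$ without being serviced by any of $T_1, \ldots, T_{i-1}$. In $\Pi$, let $(t^\circ, T^\circ)$ be the transmission that services $I^*$; it exists because $I^*$ must be serviced in $[a_{I^*}, t_i]$, and since $T^\circ$ is rooted at $r$, $c(T^\circ) \geq c(r)$.

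Let $\cJ$ denote the set of intervals serviced by $T^\circ$ in $\Pi$, partitioned as $\cJ_1 = \cJ \cap (\cI_{i-1} \setminus \cI_i) \ni I^*$ and $\cJ_2 = \cJ \cap \cI_i$. If $\cJ_2 = \varnothing$, I would simply delete $(t^\circ, T^\circ)$ from $\Pi$: the remaining transmissions still cover $\cI_{i-1} \setminus \cJ \supseteq \cI_i$, and the saving of $c(T^\circ) \geq c(r)$ already proves the lemma. Otherwise, I would replace $(t^\circ, T^\circ)$ by a single transmission at time $t_i$ --- which is feasible because by the first observation every $J \in \cJ_2$ satisfies $t_i \in [a_J, d_J]$ --- of the smaller tree $T^{**}$ defined as the minimum subtree of $T^\circ$ spanning $\{r\} \cup \{w_J : J \in \cJ_2\}$. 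The resulting $\Pi'$ is feasible for $\cI_i$, and the lemma reduces to establishing $c(T^\circ) - c(T^{**}) \geq c(r)$; equivalently, the nodes of $T^\circ$ used only to reach $\cJ_1$-endpoints (which include $w_{I^*}$) collectively carry cost at least $c(r)$.

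The principal obstacle is precisely this quantitative inequality. If every $\cJ_1$-endpoint happens to lie on a path from $r$ to some $\cJ_2$-endpoint, then it already belongs to $T^{**}$ and produces no saving --- so the naive choice of $T^\circ$ as merely ``the tree servicing $I^*$'' does not suffice in general. To close this gap I would exploit both the $3$-decreasing property of $\cT$, which forces each node's cost to be at most one third of its parent's and hence concentrates the ``extra'' cost of $T^\circ$ near the root, and the EDF-based selection rule of Algorithm~\ref{alg:online}, which actively chose to include $\cJ_1$-endpoints in $T_i$ while leaving $\cJ_2$-endpoints out --- a structural asymmetry I expect to translate into enough ``dedicated'' cost of $T^\circ$ for $\cJ_1$. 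If a single $T^\circ$ does not yield $c(r)$ of savings, I would perform the modification simultaneously on every tree of $\Pi$ that services some interval of $\cI_{i-1} \setminus \cI_i$, and accumulate the per-tree savings level by level using the geometric sum afforded by the $3$-decreasing structure.
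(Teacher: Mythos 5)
Your overall framing—construct from $OPT(\cI_{i-1})$ a feasible solution for $\cI_i$ that is cheaper by $c(r)$—matches the paper, and your ``easy case'' ($\cJ_2=\varnothing$, drop $T^\circ$ wholesale) is fine. But you explicitly flag the hard case, $c(T^\circ)-c(T^{**})\ge c(r)$, as the principal obstacle and do not close it, and the closing ideas you sketch do not match what actually works.

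Two concrete issues. First, your surgical step is local to the one transmission $T^\circ$: you shrink it to the minimal subtree $T^{**}$ spanning $r$ and the $\cJ_2$-endpoints, at time $t_i$. As you yourself observe, this can save nothing whenever every $\cJ_1$-endpoint lies on a root-to-$\cJ_2$-endpoint path, and the $3$-decreasing property alone does not rescue you: the ``extra'' cost need not concentrate where you want it, and no geometric summation over a single $T^\circ$ appears to force $c(r)$ worth of dedicated nodes. The paper instead removes $T^*_i$ entirely and then, in a \emph{reconstruction step}, scans the $\cI_i$-intervals in EDF order and grafts only the missing part of each root-to-$w_I$ path onto whichever already-existing transmission in $[t_i,d_I]$ covers the largest prefix of that path (creating a fresh transmission only if none exists in the window). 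This yields the crucial structural Lemma~\ref{lem:reconst_basic}: only nodes of $T^*_i$ are ever reconstructed, and each at most once, so the reconstruction cost is at most $c(T^*_i)$ minus the cost of whatever in $T^*_i$ escapes reconstruction.

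Second, the mechanism by which a $c(r)$ of ``escaping'' cost is exhibited is not a geometric-sum or multi-tree argument at all; it is a budget-accounting argument tied directly to Algorithm~\ref{alg:online}. Lemma~\ref{lem:reconst} shows that if $u\in T_i$ is reconstructed then $c(\cA_u)>\hat c(u)/2$ and $\cA_u\subseteq T^*_i$; Lemma~\ref{lem:goodsubset} then grows a set $U\subseteq T^*_i\cap T_i$, initialized to $\{r\}$, by repeatedly replacing any reconstructed $u\in U$ with $\cA_u$. The proportional budget split preserves $\sum_{u\in U}\hat c(u)=\hat c(r)=2c(r)$, and the bound $\hat c(v)\le 2c(v)$ for $v\in\cA_u$ (a consequence of $c(\cA_u)>\hat c(u)/2$) gives $\sum_{u\in U}c(u)\ge c(r)$ once $U$ stabilizes to an unreconstructed set. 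None of this appears in your sketch; the vague fallback of ``modifying every tree of $\Pi$ that services $\cI_{i-1}\setminus\cI_i$ and accumulating savings level by level'' goes in a different direction and would need its own (unsupplied) invariants. In short: your surgery is too rigid, and the missing quantitative step is the heart of the proof, which the paper supplies with the $U$-evolution argument rather than anything resembling your sketched completion.
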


Analyzing our algorithm is now straightforward.

\begin{theorem}\label{thm:node_weighted_competitive}
There exists an $O(D)$-competitive algorithm for \textsf{Online Multi-level Aggregation Problem with Deadlines} on $3$-decreasing trees.
\end{theorem}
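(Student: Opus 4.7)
The plan is to combine the two main lemmas via a simple telescoping argument. The first step is to note that by Lemma~\ref{lem:alg-pay}, the per-transmission cost of the algorithm is uniformly bounded: each of the $\ell$ transmitted trees $T_1, \ldots, T_\ell$ satisfies $c(T_i) \leq 2(D+1) \cdot c(r)$. Thus the algorithm's total cost is at most $2(D+1) \cdot \ell \cdot c(r)$, and the burden of the proof reduces to bounding $\ell \cdot c(r)$ against the optimal cost.

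Next, I would iterate Lemma~\ref{lem:opt-pay} along the sequence of indices $i=1, 2, \ldots, \ell$. Telescoping yields $OPT(\cI_0) - OPT(\cI_\ell) \geq \ell \cdot c(r)$. Since $OPT(\cI_\ell) \geq 0$ (costs are non-negative) and $\cI_0 = \cI$ is the full input instance, this gives $\ell \cdot c(r) \leq OPT(\cI)$.

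Putting the two bounds together, the algorithm's total cost is at most $2(D+1) \cdot \ell \cdot c(r) \leq 2(D+1) \cdot OPT(\cI)$, which is $O(D) \cdot OPT(\cI)$ as required.

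There is essentially no obstacle at this stage of the argument; all the work has been pushed into Lemmas~\ref{lem:alg-pay} and \ref{lem:opt-pay}, whose proofs are deferred to Sections~\ref{ssc:alg-pay} and~\ref{ssc:opt-pay}. The only small subtlety worth flagging is that $\cI_\ell$ may be nonempty if intervals arrive after the last transmission occurs, but these intervals either were never present at a maturity event or were served trivially by earlier transmissions; in any case one only needs $OPT(\cI_\ell) \geq 0$ for the telescoping to go through, so this causes no difficulty.
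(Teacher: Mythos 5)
Your proof is correct and matches the paper's argument essentially verbatim: combine the per-transmission cost bound from Lemma~\ref{lem:alg-pay} with the telescoping drop in optimal cost from Lemma~\ref{lem:opt-pay}. The only cosmetic difference is that the paper asserts $OPT(\cI_\ell)=0$ (since the algorithm keeps transmitting until every interval is served, so $\cI_\ell=\varnothing$), whereas you more conservatively use $OPT(\cI_\ell)\geq 0$, which is all the telescoping actually requires.
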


\begin{proof}
By Lemma~\ref{lem:alg-pay} the total cost suffered by our algorithm is at most $\ell \cdot [2(D + 1) \cdot c(r)]$. On the other hand, Lemma~\ref{lem:opt-pay} shows that the cost of the optimal solution $OPT(\cI_0) = OPT(\cI)$ is at least $\ell \cdot c(r) + OPT(\cI_\ell) = \ell \cdot c(r)$. %We complete the proof by recalling  that the cost of an optimal solution  on $\cF$ (after applying Reduction~\ref{red:combined}) cannot increase compared to an optimal solution on $\cT$.
\end{proof}

%Our result for the \textsf{Online Multi-level Aggregation Problem with Deadlines} on general trees (Theorem \ref{thm:node_weighted_general}) is obtained by combining Theorem~\ref{thm:node_weighted_competitive} with Reduction~\ref{red:combined}.

\subsection{Proof of Lemma~\ref{lem:alg-pay}.} \label{ssc:alg-pay}

In this section we prove Lemma~\ref{lem:alg-pay}. Let us first recall the lemma.

\begin{replemma}{lem:alg-pay}
%\begin{lemma}
For every subtree $T$ of $\cT$ transmitted by the algorithm, $c(T) \leq 2(D+1) \cdot c(r)$.
%\end{lemma}
\end{replemma}

%The subtree $T$ produced by Algorithm~\ref{alg:online} (from which $T'$ is generated) is obtained from a single execution of the algorithm. We first prove some claims about such an execution. %

We begin the proof of the above lemma with the following claim.

\begin{lemma} \label{lem:A_u_bound}
For every node $u$ added to $T$, $c(\cA_u) \leq \frac{1}{2}\left(\hat{c}(u)+c(u)\right)$.
\end{lemma}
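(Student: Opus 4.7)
The plan is to localize the argument to a single call of Algorithm~\ref{alg:online} and, within that call, to the specific execution of the inner loop of Line~\ref{line:u_loop} for the node $u$. The loop condition guarantees that as long as the body is entered we have $c(\cA_u) \leq \hat{c}(u)/2$. So if the loop exits because no active interval is left in $\cT_u \setminus T$, the bound $c(\cA_u) \leq \hat{c}(u)/2 \leq \tfrac{1}{2}(\hat{c}(u)+c(u))$ holds trivially (recall $c(u) > 0$). Otherwise the loop exits because $c(\cA_u)$ has just exceeded $\hat{c}(u)/2$, and the whole proof reduces to showing that the increment added to $\cA_u$ during the final iteration is at most $c(u)/2$.

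To analyze the final iteration I first need a structural observation: whenever $(u,\hat{c}(u))$ is dequeued, the node $u$ together with every ancestor of $u$ up to $r$ already belongs to $T$. This follows by induction on the order of dequeueing, since tuples are enqueued only for nodes $v \in \cA_{u'}$ and $\cA_{u'}$ is appended to $T$ immediately afterwards. Moreover the inductive step uses the fact that when a node is inserted into $\cA_{u'}$ it comes together with all of its ancestors that were not yet in $T$. Consequently, during any iteration of the inner loop processing $u$, the interval $I$ selected on Line~\ref{line:interval_selection} has $w_I \in \cT_u \setminus T$, so $w_I$ is a strict descendant of $u$; and since all ancestors of $u$ sit in $T$, the only nodes on the path from $w_I$ to $r$ that are not in $T$ are strict descendants of $u$ lying on the root-to-$w_I$ path inside $\cT_u$.

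Now I would bound the increment in the final iteration. The nodes that may be added to $\cA_u$ during this iteration all lie on a single root-to-leaf path of $\cT_u$ strictly below $u$. Using the fact that $\cT$ is $3$-decreasing, the cost of the $k$-th node along such a path (starting with the child of $u$) is at most $c(u)/3^{k}$. Therefore the total cost of all such nodes is bounded by the geometric series
\[
\sum_{k=1}^{\infty} \frac{c(u)}{3^{k}} \;=\; \frac{c(u)/3}{1 - 1/3} \;=\; \frac{c(u)}{2}.
\]
Combining this with the pre-iteration bound $c(\cA_u) \leq \hat{c}(u)/2$ gives $c(\cA_u) \leq \tfrac{1}{2}(\hat{c}(u)+c(u))$, which is precisely the statement of the lemma.

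The only subtle point is the structural observation that $\cA_u$ contains only strict descendants of $u$ so the last-iteration increment really lies on a single descending path; once this is established, the geometric sum is routine. Hence I expect the main obstacle to be a careful phrasing of that induction, not the arithmetic.
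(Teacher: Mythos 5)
Your proof is correct and follows essentially the same argument as the paper's: the loop invariant gives $c(\cA_u) \leq \hat{c}(u)/2$ before the final addition, and the $3$-decreasing property bounds that last increment (nodes on a single path strictly below $u$) by $c(u)/2$ via the geometric series. Your added structural observation — that $u$ and all its ancestors already lie in $T$ when $(u,\hat{c}(u))$ is dequeued — is left implicit in the paper's phrase ``nodes belonging to the path from $u$ to $w_I$,'' and your opening case split is slightly imprecise (the loop can exit with no active intervals remaining while $c(\cA_u)$ has already exceeded $\hat{c}(u)/2$ in the last iteration), but this is harmless since your bound on the final increment covers every exit condition uniformly, exactly as the paper's proof does.
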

\begin{proof}
Each time that Algorithm~\ref{alg:online} adds nodes to $\cA_u$, it adds only nodes belonging to the path from $u$ to $w_I$ of a single interval $I$. Since $T$ is $3$-decreasing, the total cost of the nodes on such a path is upper bounded by
\[
	\sum_{i = 1}^D \left(\frac{1}{3}\right)^i \cdot c(u)
	\leq
	\sum_{i = 1}^\infty \left(\frac{1}{3}\right)^i \cdot c(u)
	=
	\frac{c(u)/3}{1 - 1/3}
	=
	\frac{c(u)}{2}
	\enspace.
\]
On the other hand, Algorithm~\ref{alg:online} stops adding nodes to $\cA_u$ once $c(\cA_u)$ exceeds $\frac{\hat{c}(u)}{2}$, thus completing the proof.
\end{proof}

Next, define inductively the level of nodes in $T$ as follows. The level of the root $r$ is $0$. For node $v$, suppose that node $u$ added it to $\cA_u$; then the level of $v$  is defined to be the level of $u$ plus $1$. Since the nodes of $\cA_u$ are all descendants of $u$, the level of each node $u\in \cT$ is guaranteed to be at most $D$.

\begin{lemma} \label{lem:c_upper_bound}
For every node $u \in T$, $c(u) \leq \hat{c}(u)$.
\end{lemma}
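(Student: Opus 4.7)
The plan is to prove Lemma~\ref{lem:c_upper_bound} by induction on the level of the node $u \in T$, using Lemma~\ref{lem:A_u_bound} as the key ingredient. The base case is immediate: the root $r$ sits at level $0$ and is assigned $\hat{c}(r) = 2c(r) \geq c(r)$, so the claim holds there.

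For the inductive step, consider a node $v \in T$ at level at least $1$, and let $u$ be the node that added $v$ to $T$ (i.e., $v \in \cA_u$, where $u$ is at level one less than $v$). By definition of the algorithm, $\hat{c}(v) = c(v) \cdot \hat{c}(u)/c(\cA_u)$. The desired inequality $c(v) \leq \hat{c}(v)$ therefore reduces to $c(\cA_u) \leq \hat{c}(u)$. I would then combine Lemma~\ref{lem:A_u_bound} with the inductive hypothesis applied to $u$ (which is at a smaller level and is itself in $T$): Lemma~\ref{lem:A_u_bound} gives $c(\cA_u) \leq \tfrac{1}{2}(\hat{c}(u)+c(u))$, and the inductive hypothesis provides $c(u) \leq \hat{c}(u)$, so together $c(\cA_u) \leq \hat{c}(u)$ as needed.

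There is essentially no obstacle here, since the lemma is a straightforward bookkeeping consequence of how budgets are propagated in Algorithm~\ref{alg:online}. The only subtlety worth flagging is ensuring that the induction is well-founded, which it is because a node $v$ is added to $T$ only through some $u$ that is already in $T$ and is strictly closer to the root (so its level in $T$ is strictly smaller), and levels are nonnegative integers bounded by $D$. Also, one should note that $c(\cA_u) > 0$ whenever $\cA_u$ is nonempty (since costs are strictly positive after the preprocessing described in Section~\ref{sec:pre}), so the quotient $\hat{c}(u)/c(\cA_u)$ used to assign budgets to children of $u$ is well defined whenever children actually exist.
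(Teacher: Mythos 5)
Your proof is correct and follows exactly the same route as the paper: induction on the level, with the inductive step reducing $c(v) \leq \hat{c}(v)$ to $c(\cA_u) \leq \hat{c}(u)$ via Lemma~\ref{lem:A_u_bound} combined with the inductive hypothesis applied to $u$. The paper just writes the same reasoning as a single chain of inequalities rather than isolating the intermediate claim $c(\cA_u) \leq \hat{c}(u)$.
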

\begin{proof}
We prove the lemma by induction on the level of $u$. For the root $r$ the lemma is immediate since $\hat{c}(u) = 2c(u)$. Next, let $v$ be a node in $\cA_u$ (added by $u$). Then,
\begin{align*}
	\hat{c}(v)
	& =
	c(v) \cdot \frac{\hat{c}(u)}{c(\cA_u)} \\
	& \geq
	c(v) \cdot \frac{\hat{c}(u)}{\frac{1}{2}\left(\hat{c}(u)+c(u)\right)} \\
	& \geq
	c(v) \cdot \frac{\hat{c}(u)}{\hat{c}(u)}
	=
	c(v),
	%\enspace,
\end{align*}
where the first inequality follows from Lemma~\ref{lem:A_u_bound}, and the second inequality follows from the induction hypothesis applied to $u$.
%
%We still need to take care of the case $c(\cA_u) = 0$. In this case Algorithm~\ref{alg:online} sets $\hat{c}(v) = 0$. However, this case also implies $c(v) = 0$ since $v \in \cA_u$, and thus, the lemma holds in this case as well.
\end{proof}

Note that for every node $u$ such that $\cA_u \neq \varnothing$, by construction,
\[
	\sum_{v \in \cA_u} \hat{c}(v)
	=
	\sum_{v \in \cA_u} \left[c(v) \cdot \frac{\hat{c}(u)}{c(\cA_u)}\right]
	=
	\hat{c}(u).
\]
 %When $\cA_u = \varnothing$ we get $\sum_{v \in \cA_u} \hat{c}(v) = 0 \leq \hat{c}(u)$.
Hence, the following observation can be proved by induction.

\begin{observation} \label{obs:level_cost}
In each level of $T$,
the sum of $\hat{c}(u)$, taken over all nodes $u$, is at most $\hat{c}(r) = 2c(r).$
\end{observation}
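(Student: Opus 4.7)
The plan is a straightforward induction on the level index $k$. The base case is immediate: level~$0$ consists of the single node $r$, and by initialization $\hat{c}(r) = 2c(r)$, so the sum for level~$0$ is exactly $2c(r)$.

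For the inductive step, I will use the displayed identity preceding the observation, which states that for every node $u \in T$ with $\cA_u \neq \varnothing$,
\[
    \sum_{v \in \cA_u} \hat{c}(v) \;=\; \hat{c}(u).
\]
The key structural fact I would invoke is that, by the definition of the level function, every node $v$ at level $k+1$ was added to $T$ through the set $\cA_u$ of exactly one node $u$ at level $k$. Consequently, the nodes at level $k+1$ partition into the (disjoint) sets $\{\cA_u : u \text{ is at level } k\}$, and summing the identity above over all level-$k$ nodes with $\cA_u \neq \varnothing$ gives
\[
    \sum_{v \text{ at level } k+1} \hat{c}(v)
    \;=\; \sum_{\substack{u \text{ at level } k\\ \cA_u \neq \varnothing}} \hat{c}(u)
    \;\leq\; \sum_{u \text{ at level } k} \hat{c}(u).
\]
Applying the induction hypothesis to the right-hand side yields the bound $2c(r)$, completing the step.

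There is no real obstacle here; the only point that needs a careful sentence is justifying that the level-$(k+1)$ nodes are exactly the disjoint union of the $\cA_u$'s over level-$k$ nodes $u$. This follows directly from the inductive definition of level: a node enters $T$ only via some $\cA_u$, and this insertion simultaneously defines its level as one more than that of $u$, so there is no double counting and no level-$(k+1)$ node is missed.
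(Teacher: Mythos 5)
Your proposal is correct and matches the paper's intended argument: the paper states only that the observation ``can be proved by induction'' from the identity $\sum_{v \in \cA_u} \hat{c}(v) = \hat{c}(u)$, and you have simply filled in the routine details of that induction (base case at level $0$, the disjoint partition of level-$(k{+}1)$ nodes into the sets $\cA_u$ over level-$k$ nodes $u$, and the inequality accounting for nodes with $\cA_u = \varnothing$).
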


We now observe the following.
\[
	c(T)
	=
	\sum_{u\in T} c(u)
	\leq
	\sum_{u\in T}\hat{c}(u)
	\leq
	2(D+1) \cdot c(r)
	\enspace.
\]
The first inequality follows since Lemma~\ref{lem:c_upper_bound} guarantees that $c(u)\leq \hat{c}(u)$ for every node $u \in T$. The second inequality follows since there are at most $D+1$ possible levels, and Observation~\ref{obs:level_cost} shows that the sum of $\hat{c}(u)$ in each level is at most $2c(r)$.

\subsection{Proof of Lemma \ref{lem:opt-pay}.} \label{ssc:opt-pay}

In this section we prove Lemma~\ref{lem:opt-pay}. We begin by recalling the lemma.

\begin{replemma}{lem:opt-pay}
%\begin{lemma}
$OPT(\cI_{i}) \leq OPT(\cI_{i - 1}) -c(r)$ for every $1 \leq i \leq \ell$.
\end{replemma}
%\end{lemma}

To prove the lemma it suffices to construct a solution $S$ servicing all the intervals of $\cI_i$ whose cost is at most $OPT(\cI_{i - 1})-c(r)$. The lemma will then follow since, being an optimal solution, $OPT(\cI_{i})$ is not more expensive then any other feasible solution for $(\cT, \cI_i)$.

Let us now construct the above mentioned solution $S$. Recall that $T_i$ is the subtree of $\cT$ transmitted by the algorithm at its $i$-th transmission, and let us denote by $t_i$ the time of this transmission. There are a few assumptions that we can make about $OPT(\cI_{i - 1})$. First, we may assume that $OPT(\cI_{i - 1})$ transmits at most one subtree at every given time (otherwise, we may merge subtrees transmitted at the same time without increasing the total cost). Second, we may assume that $OPT(\cI_{i - 1})$ makes its first transmission at time $t_i$ because that is the time of the earliest deadline among the deadlines of the intervals of $\cI_{i - 1}$ (specifically, the interval whose deadline triggered the $i$-th transmission of the algorithm is in $\cI_{i - 1}$ and its deadline is $t_i$). Given these assumptions, there must be exactly one transmission of $OPT(\cI_{i - 1})$ at time $t_i$. Let us denote the subtree transmitted at this time by $T^*_i$.

We obtain the solution $S$ from the optimal solution $OPT(\cI_{i-1})$ by applying the following two steps to the last solution.
\begin{enumerate}
\item Remove $T^*_{i}$ from the solution. \label{step-1}
\item {\bf Reconstruction step:} Scan the intervals of $\cI_{i}$ in a non-decreasing order of their deadline. For every such interval $I$ we do the following to guarantee that it is serviced by $S$. \label{step-2}
\begin{itemize}
\item Find the subtree of the current solution transmitted within the range $[t_i, d_I]$ which contains the largest fraction of the path from $r$ to $w_I$ (breaking ties arbitrarily), and add to this subtree the remaining part of this path. Note that the above tree might contain all the path, in which case there is no need to add anything. %(so that now this subtree serves $I$).
\item If the current solution makes no transmissions within the range $[t_i, d_I]$, then introduce into it a new transmission transmitting the subtree consisting solely of the path from $r$ to $w_I$. The location of this new transmission within the range $[t_i, d_I]$ can be chosen arbitrarily.
\end{itemize}
\end{enumerate}

\begin{observation}
The solution $S$ is a feasible solution for $(\cT, \cI_{i})$.
\end{observation}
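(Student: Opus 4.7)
The plan is to verify that every interval $I \in \cI_i$ has some transmission in $S$ at a time $t \in [a_I,d_I]$ whose transmitted subtree contains $w_I$. The starting point is that $\cI_i \subseteq \cI_{i-1}$ (because $T_i$ services exactly the intervals $\cI_{i-1}\setminus \cI_i$), so every $I\in \cI_i$ is already serviced by $OPT(\cI_{i-1})$ before any modification is applied. The construction performs only two types of changes: it deletes the single subtree $T^*_i$, and thereafter only enlarges existing transmitted subtrees (by attaching missing root-to-$w_I$ suffixes) or inserts brand-new transmissions. In particular, no subtree other than $T^*_i$ ever shrinks and no transmission time is ever moved, so if an interval is serviced in $OPT(\cI_{i-1})$ by a transmission different from $T^*_i$, it is still serviced in $S$.

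I would then split $\cI_i$ into two classes according to whether $a_I > t_i$. For the easy class, $a_I > t_i$ means that $T^*_i$, transmitted at time $t_i$, could not have serviced $I$ in $OPT(\cI_{i-1})$, so $I$ was serviced there by some other transmission which survives untouched into $S$. For the harder class, $a_I\le t_i\le d_I$, the interval $I$ may have relied solely on $T^*_i$ in $OPT(\cI_{i-1})$ and is at risk after step~\ref{step-1}. But when the reconstruction loop reaches $I$ in non-decreasing deadline order, it guarantees that the current solution contains some transmission at a time $t'\in[t_i,d_I]$ whose subtree contains the entire path from $r$ to $w_I$, either by extending the subtree with largest matched fraction or by inserting a fresh transmission carrying the path. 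Because $a_I\le t_i$ we have $t'\in[a_I,d_I]$, so this transmission services $I$.

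Two small bookkeeping observations close the argument: first, attaching the missing suffix of the root-to-$w_I$ path to a subtree that already contains $r$ together with a prefix of the path produces a connected subtree still rooted at $r$, hence a valid transmission; second, an interval processed earlier in the reconstruction cannot lose its servicing later, since subsequent operations only enlarge existing subtrees or add new transmissions. The main obstacle I anticipate is precisely the dichotomy between $a_I>t_i$ and $a_I\le t_i$: the reconstruction loop only produces a transmission within $[t_i,d_I]$, and this interval is contained in $[a_I,d_I]$ only in the second case; the first case has to be dispatched separately by noting that $T^*_i$ could not have been servicing intervals of that class to begin with, so the original witness in $OPT(\cI_{i-1})$ carries over to $S$.
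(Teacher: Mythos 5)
Your proof is correct, and it is considerably more careful than the paper's own argument, which consists of the single sentence ``The construction of $S$ guarantees that every interval of $\cI_i$ is serviced by $S$ before its deadline.'' That one-liner papers over exactly the subtlety you identify: the reconstruction step only produces (or certifies) a transmission at some time $t'\in[t_i,d_I]$, which is a witness for $I$ only when $a_I\le t_i$. Your dichotomy handles the other case, $a_I>t_i$, by observing that $T^*_i$ (transmitted at time $t_i<a_I$) could not have been the transmission of $OPT(\cI_{i-1})$ that serviced $I$, so the actual servicing transmission of $OPT(\cI_{i-1})$ survives both steps unchanged (Step~\ref{step-1} deletes only $T^*_i$, Step~\ref{step-2} only enlarges subtrees or adds new ones). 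Your two ``bookkeeping'' remarks are also exactly right: because every transmitted tree is rooted at $r$, its intersection with the $r$-to-$w_I$ path is a prefix, so appending the missing suffix yields a connected subtree rooted at $r$; and no later operation removes a node or moves a transmission time, so earlier witnesses persist. In short, your proposal supplies the case analysis that the paper leaves implicit, and does so correctly.
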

\begin{proof}
The construction of $S$ guarantees that every interval of $\cI_{i}$ is serviced by $S$ before its deadline.
\end{proof}

We are left to prove that the cost of $S$ is at most $OPT(\cI_{i})-c(r)$. It is clear that Step~\ref{step-1} in the procedure for constructing $S$ decreases the cost by $c(T^*_{i})$, thus, it suffices to show that the second step increases the cost of $S$ by at most
$c(T^*_{i})-c(r)$. The rest of this section is devoted to proving this claim.

We say that a node $u$ is \emph{reconstructed} whenever it is added to some transmission of the solution during Step~\ref{step-2} of the above procedure.

\begin{lemma} \label{lem:reconst_basic}
The reconstruction step obeys the following claims.
\begin{enumerate}
\item Only nodes of $T^*_{i}$ are reconstructed.
\item Each node of $T^*_{i}$ is reconstructed at most once.
\end{enumerate}
\end{lemma}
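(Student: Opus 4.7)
The plan is to prove the two claims separately, exploiting the earlier assumption that the first transmission of $OPT(\cI_{i-1})$ occurs at time $t_i$. This assumption ensures that every interval $I\in\cI_{i-1}$ (and in particular every $I\in\cI_i$) is serviced in $OPT(\cI_{i-1})$ by at least one transmission lying inside the window $[t_i,d_I]$, which will be the set we scan during reconstruction.

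For the first claim, I would fix an interval $I\in\cI_i$ being processed and split into two cases. Since the reconstruction step only augments existing transmissions and never deletes any, every transmission of $OPT(\cI_{i-1})$ distinct from $T^*_i$ still belongs to the current solution. If some such surviving transmission in $[t_i,d_I]$ originally serviced $I$, then it already contains the whole path from $r$ to $w_I$ and is therefore a candidate of maximal overlap; no node needs to be appended in this case. Otherwise, $I$ was serviced in $OPT(\cI_{i-1})$ only by $T^*_i$, which implies that the entire path from $r$ to $w_I$ sits inside $T^*_i$, so every node appended during $I$'s processing (whether it is attached to an existing subtree or forms a brand-new transmission) is a node of $T^*_i$.

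For the second claim, I would use a contradiction argument. Suppose some $u$ is reconstructed twice, first while processing $I_1$ and later while processing $I_2$, with $d_{I_1}\leq d_{I_2}$. Let $T'_1$ denote the subtree into which $u$ was inserted during $I_1$'s reconstruction; by construction $T'_1$ is transmitted inside $[t_i,d_{I_1}]\subseteq[t_i,d_{I_2}]$, so it is still a candidate when $I_2$ is processed. The key structural fact is that every transmitted subtree is rooted at $r$, so its intersection with the path $P_{I_2}$ from $r$ to $w_{I_2}$ is a prefix of $P_{I_2}$. Since $u$ lies on $P_{I_2}$ (otherwise it could not be added during $I_2$'s processing) and $u$ belongs to $T'_1$ after $I_1$'s reconstruction, the prefix $T'_1\cap P_{I_2}$ already contains $u$. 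Whichever subtree is actually selected for $I_2$ has a prefix of $P_{I_2}$ at least as long, hence also contains $u$, contradicting the premise that $u$ is inserted while processing $I_2$.

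The main obstacle, in my view, is the arbitrary tie-breaking in the choice of the maximal-overlap subtree, which at first glance introduces ambiguity into Claim 2. This is handled by the prefix structure of the overlaps with $P_{I_2}$: any two prefixes of $P_{I_2}$ are nested, so ``prefix at least as long as $T'_1\cap P_{I_2}$'' automatically means ``contains every node of $T'_1\cap P_{I_2}$'', independent of how ties are broken among candidates.
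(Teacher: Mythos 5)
Your proof is correct and follows essentially the same line of reasoning as the paper: Part~1 hinges on the fact that if $T^*_i$ did not contain the full path to $w_I$ then some surviving transmission in $[t_i,d_I]$ already does, and Part~2 hinges on the prefix structure of $T'\cap P_{I_2}$. The one difference is that you spell out the prefix/nesting argument (intersections of rooted subtrees with a root-to-node path are nested prefixes, so ``at least as large overlap'' implies ``contains $u$''), which the paper leaves implicit but indeed relies on; making it explicit is a modest improvement in rigor rather than a different approach.
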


\begin{proof}
Consider a node $u$ which is reconstructed, and let $I$ be the interval whose processing by the reconstruction step caused the first reconstruction of $u$. Thus, $u$ is on the path from $r$ to $w_I$. If $u$ does not belong to $T^*_i$, then $OPT(\cI_{i - 1})$ must service $I$ by some transmission other than $T^*_i$ within the range $(t_i, d_I]$. Since this transmission is not removed, the processing of $I$ by the reconstruction step could not cause the reconstruction of any node, which contradicts the definition of $u$. Hence, any node $u$ which is reconstructed must belong to $T^*_i$.

It remains to prove the second part of the lemma. The fact that $u$ was reconstructed when $I$ was processed means that prior to $I$'s processing the interval $[t_i, d_I]$ did not contain any transmission involving $u$. Since the reconstruction step scans the intervals in a non-decreasing order of their deadlines, this means that any interval $I'$ for which $u$ is on the path from $r$ to $w_{I'}$ must have a deadline $d_{I'} \geq d_I$. Hence, when $I'$ is processed by the reconstruction step, the range $[t_i, d_{I'}] \supseteq [t_i, d_I]$ already includes a subtree containing $u$, and thus, $I'$ does not cause another reconstruction of the node $u$.
% assume towards a contradiction that some node $u$ is reconstructed twice: the first time when an interval $I$ is consider by the reconstruction step, and a second time when an interval $I'$ is considered. Thus, $d(I) \leq d(I')$. Since $u$ is constructed both when $I'$ and $I'$ are considered, $u$ must be on the path from $r$ to both $w_I$ and $w_{I'}$. Thus, after $I$ is considered the interval $[t, d(I)]$ is guaranteed to contain a subtree containing $u$. When $I'$ is considered the interval $[t, d(I')] \supseteq [t, d(I')]$ also contains this subtree, and thus, the tree along this range which contains the largest fraction of the path from $r$ to $w_{I'}$ already contains $u$. Since the reconstruction step never creates duplicate nodes in a subtree, it does add $u$ again to this subtree, which contradicts our assumption.
\end{proof}

The last lemma shows that only a limited set of nodes (namely, the nodes of $T^*_i$) might be reconstructed. Lemma~\ref{lem:goodsubset} shows that even within this limited set there is a significant subset of nodes that are not reconstructed. The following lemma proves a few technical claims used later in the proof of Lemma~\ref{lem:goodsubset}.

\begin{lemma}\label{lem:reconst}
Let $u\in T_{i}$ be a node that is reconstructed. Then, $c(\cA_u)>\frac{\hat{c}(u)}{2}$ and $\cA_u \subseteq T^*_{i}$.
\end{lemma}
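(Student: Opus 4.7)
The plan is to let $I \in \cI_i$ denote the (unique, by Lemma~\ref{lem:reconst_basic}) interval whose processing during the reconstruction step causes $u$ to be reconstructed, so that $u$ lies on the path from $r$ to $w_I$. The key invariant I would extract --- exactly as in the proof of Lemma~\ref{lem:reconst_basic} --- is that immediately before $I$ is processed, no transmission of the current (modified) solution within the time window $[t_i, d_I]$ contains $u$; otherwise the \emph{best} subtree identified by the reconstruction step would already reach at least as deep as $u$ along the path from $r$ to $w_I$, and $u$ would not be added. Since the reconstruction step only adds nodes to existing transmissions, this invariant also rules out any original transmission $T^*_j$ of $OPT(\cI_{i-1})$ scheduled at a time $t^*_j \in (t_i, d_I]$ from containing $u$.

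For the first claim, I would first show that $I$ is serviced by $T^*_i$ in $OPT(\cI_{i-1})$. Since $OPT(\cI_{i-1})$'s first transmission is at $t_i$, it services $I$ at some time $t^* \in [t_i, d_I]$; if $t^* > t_i$ then the servicing transmission contains $w_I$ and, being a subtree rooted at $r$, it contains the entire root-to-$w_I$ path, including $u$, contradicting the invariant. Thus $t^* = t_i$, which forces $a_I \leq t_i$, so $I$ is active at time $t_i$. Because $w_I \notin T_i$ (as $I \in \cI_i$), $I$ remains an active interval with $w_I \in \cT_u \setminus T$ throughout the processing of $u$ by Algorithm~\ref{alg:online} yet is never picked; the only way the inner while loop could have exited with $I$ still available is because the cost condition $c(\cA_u) \leq \hat{c}(u)/2$ failed, yielding $c(\cA_u) > \hat{c}(u)/2$.

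For the second claim, I would argue analogously for every interval $I'$ actually picked while processing $u$. Because $I$ was not picked and intervals are processed in non-decreasing order of deadlines, each such $I'$ satisfies $d_{I'} \leq d_I$, is active at $t_i$, and has $w_{I'} \in \cT_u$ with $u$ on the path from $r$ to $w_{I'}$. Applying the same dichotomy, if $OPT(\cI_{i-1})$ serviced $I'$ at a time $t^* > t_i$, then $t^* \in (t_i, d_{I'}] \subseteq (t_i, d_I]$, and this transmission would contain $w_{I'}$ and hence $u$, again contradicting the invariant. Hence $I'$ is serviced by $T^*_i$, so $w_{I'} \in T^*_i$, and the entire path from $r$ to $w_{I'}$ lies in $T^*_i$. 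Since the nodes added to $\cA_u$ when $I'$ is picked all lie on this path, they belong to $T^*_i$; taking the union over all picked $I'$ gives $\cA_u \subseteq T^*_i$.

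The hard part is isolating the right invariant --- that no transmission in $[t_i, d_I]$ contains $u$ at the moment $I$ is processed --- and combining it with the rooted-subtree structure of transmissions to force every active interval with deadline at most $d_I$ and demand node in $\cT_u$ to be serviced by $T^*_i$ in $OPT(\cI_{i-1})$. Once this structural observation is in hand, both conclusions of the lemma follow immediately.
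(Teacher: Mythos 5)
Your proof is correct and follows essentially the same route as the paper's: both arguments isolate the invariant that, at the moment $I$ is processed, no (surviving) transmission in $[t_i, d_I]$ contains $u$, use it to force $I$ to be active at $t_i$ and unserviced by $T_i$ (hence the cost condition must have terminated the inner loop), and then reuse the same dichotomy to show every interval $I'$ that populated $\cA_u$ is serviced by $T^*_i$, so its root-to-$w_{I'}$ path lies in $T^*_i$. The only difference is presentational --- you state the invariant and its extension to the original transmissions of $OPT(\cI_{i-1})$ explicitly up front, whereas the paper folds the same reasoning into the two contradiction arguments.
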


\begin{proof}
Let $I$ be the interval that caused the reconstruction of $u$. Since $I$ belongs to $\cI_i$ (intervals outside of $\cI_i$ are not processed by the reconstruction step), it must be that $I$ is not serviced by $T_i$. On the other hand, the fact that $I$ caused the reconstruction of nodes means that it is not serviced by any subtree of $OPT(\cI_{i - 1})$ other than $T^*_i$, thus, it must be active at time $t_i$.

The above observations imply that when Algorithm~\ref{alg:online} constructed $T_i$ it left the inner loop growing $\cA_u$ while there were still active intervals associated with nodes of $\cT_u \setminus T$; which can only happen when $c(\cA_u)>\frac{\hat{c}(u)}{2}$. Next, consider any node $v\in \cA_u$. Since Algorithm~\ref{alg:online} scans the intervals in a non-decreasing deadline order while growing $\cA_u$, $v$ must have been added to $A_u$ due to being on the path from $r$ to $w_{I'}$ of some interval $I'$ having a deadline $d_{I'} \leq d_I$. Assume towards a contradiction that $T^*_i$ does not contain this path. Clearly, this assumption implies that $OPT(\cI_{i - 1})$ services $I'$ by some subtree transmitted during the range $(t_i, d_{I'}]$. Since any subtree servicing $I'$ must include $u$, $u$ is already present within the range $(t_i, d_{I'}] \subseteq [t_i, d_I]$ when $I$ is processed by the reconstruction step; which contradicts the definition of $I$ as the interval whose processing caused the reconstruction of $u$.
\end{proof}

\begin{lemma}\label{lem:goodsubset}
There exists a set $U$ of nodes obeying the following properties.
\begin{enumerate}
\item $U \subseteq T^*_{i}\cap T_i$. \label{prop-0}
\item $\sum_{u \in U} \hat{c}(u) = \hat{c}(r)=2c(r)$. \label{prop-2}
\item $\hat{c}(u) \leq 2c(u)$ for all $u\in U$.\label{prop-3}
\item The nodes of $U$ are not reconstructed. \label{prop-1}
\end{enumerate}
\end{lemma}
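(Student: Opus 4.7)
The plan is to construct $U$ by iterative refinement. Initialize $F := \{r\}$. As long as $F$ contains a reconstructed node $u$, replace $u$ in $F$ by the set $\cA_u$ produced by Algorithm~\ref{alg:online}; that is, set $F \gets (F \setminus \{u\}) \cup \cA_u$. When no node of $F$ is reconstructed, define $U := F$. Since every replacement exchanges a node at level $\ell$ for nodes at levels strictly greater than $\ell$, and $T$ has only finitely many levels, the procedure terminates.

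Properties~\ref{prop-0}, \ref{prop-2}, and~\ref{prop-1} will follow from straightforward invariants of this procedure. Property~\ref{prop-1} is immediate from the termination condition. For Property~\ref{prop-0}, the base case $r \in T^*_i \cap T_i$ holds because $T^*_i$ is a subtree of $\cT$ transmitted by $OPT$ and hence rooted at $r$, while $T_i$ is initialized to $\{r\}$ by Algorithm~\ref{alg:online}; when a reconstructed $u$ is subsequently expanded, Lemma~\ref{lem:reconst} gives $\cA_u \subseteq T^*_i$, and the algorithm's construction gives $\cA_u \subseteq T_i$. For Property~\ref{prop-2}, the sum $\sum_{u \in F} \hat{c}(u)$ is invariant under each replacement because Algorithm~\ref{alg:online}'s rescaling enforces $\sum_{v \in \cA_u} \hat{c}(v) = \hat{c}(u)$ whenever $\cA_u \neq \varnothing$, and the initial sum equals $\hat{c}(r) = 2c(r)$.

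The main obstacle, and the conceptual heart of the argument, is Property~\ref{prop-3}. Consider $v \in U$. If $v = r$, then $\hat{c}(r) = 2c(r)$ by construction. Otherwise $v$ entered $F$ as a member of $\cA_{u'}$ during the expansion of some reconstructed parent $u' \in T_i$. The crucial quantitative bound $c(\cA_{u'}) > \hat{c}(u')/2$ from Lemma~\ref{lem:reconst} combines with the algorithm's definition $\hat{c}(v) = c(v) \cdot \hat{c}(u') / c(\cA_{u'})$ to yield $\hat{c}(v) < 2 c(v)$. The subtle point to appreciate is that expanding exactly along reconstructed nodes is what makes Property~\ref{prop-3} automatic: the very condition characterizing a reconstructed node, namely that Algorithm~\ref{alg:online}'s inner loop exited because its budget was exhausted rather than because intervals ran out, is precisely the condition ensuring that the rescaled child budgets stay within a factor of $2$ of the child costs. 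Once this alignment is recognized, the remainder of the verification is bookkeeping.
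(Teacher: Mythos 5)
Your proof is correct and takes essentially the same route as the paper: start from $\{r\}$, repeatedly replace any reconstructed node $u$ by $\cA_u$, and verify that Properties~\ref{prop-0}--\ref{prop-3} are invariants of each replacement, with the quantitative bound $c(\cA_u) > \hat{c}(u)/2$ from Lemma~\ref{lem:reconst} (applicable because the invariant keeps $u \in T_i$) giving both $\cA_u \subseteq T^*_i$ and $\hat{c}(v) \leq 2c(v)$. The only cosmetic difference is that you verify the invariants on the final set $U$ rather than checking preservation at each step, but the substance and the key ingredients are identical.
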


\begin{proof}
We start with a set $U$ obeying all the properties other than Property~\ref{prop-1}, and let it evolve while maintaining these properties. The evolution of $U$ ends as soon as it obeys also Property~\ref{prop-1}. More specifically, we initially set $U=\{r\}$. Observe that this set indeed satisfies all the properties other than Property~\ref{prop-1}. If $U$ also satisfies Property~\ref{prop-1} then we are done. Otherwise, there must be a node $u$ in the current set $U$ which is reconstructed. Since we maintain $U$ as a set obeying Property~\ref{prop-0}, $u$ must belong to $T_i$. Hence, by Lemma~\ref{lem:reconst}, it must hold that $c(\cA_u) \geq \frac{1}{2}\hat{c}(u)$ and $\cA_u \subseteq T^*_i$.

At this point we evolve $U$ by removing $u$ from it and adding the nodes of $\cA_u$ instead. Since $\cA_u \subseteq T_i \cap T^*_i$, Property~\ref{prop-0} is preserved. Additionally, $\sum_{v \in \cA_u} \hat{c}(v) = \sum_{v \in \cA_u} \left[c(v) \cdot \frac{\hat{c}(u)}{c(\cA_u)}\right]=\hat{c}(u)$, and thus, Property~\ref{prop-2} is preserved as well. Finally, Property~\ref{prop-3} also remains valid since Algorithm~\ref{alg:online} sets $\hat{c}(v)= c(v) \cdot \frac{\hat{c}(u)}{c(\cA_u)} \leq 2 c(v)$ for each node $v\in \cA_u$ (where the inequality holds since Lemma~\ref{lem:reconst} implies $c(\cA_u) \geq \frac{1}{2}\hat{c}(u)$).

We can now repeat the above evolution step as long as Property~\ref{prop-1} is violated. However, this evolution cannot continue forever since each step of it replaces a single node with nodes appearing lower than it in $\cT$. Hence, the evolution presented is guaranteed to end up eventually with a set $U$ obeying Property~\ref{prop-1} (as well as the three other properties).
\end{proof}

To conclude the proof of Lemma~\ref{lem:opt-pay} we note that, by combining all the properties of $U$ in Lemma~\ref{lem:goodsubset}, we get that there must exist a set $U$ of node in $T^*_i$ which are not reconstructed and have a total cost of at least $\sum_{u\in U}c(u) \geq \frac{1}{2}\sum_{u\in U}\hat{c}(u) = c(r)$. Together with the claim of Lemma~\ref{lem:reconst_basic} that only nodes of $T^*_i$ are reconstructed, and even they can only be reconstructed once, we get that the increase in the cost of $S$ during the reconstruction step is at most $c(T^*_i) - \sum_{u\in U}c(u) \leq c(T^*_i) - c(r)$.

\section{Algorithm for General Trees} \label{sec:algorithm_general}

In this section we show how the algorithm from Section~\ref{sec:algorithm_decreasing} can be modified to be $O(D(\cT))$-competitive also for {\OMAP} on general trees. We begin by describing a reduction, originally showed by~\cite{BBBCDF15}, transforming a general tree into a forest of $3$-decreasing trees.

Given a tree $\cT$, we construct a forest $\cF$ of $3$-decreasing trees as follows. The nodes of $\cF$ are the same as the nodes of $\cT$. We connect each node $u$ with an edge to the first node $v$ on the path from $u$ to $r(\cT)$ whose cost obeys $c(v) \geq 3c(u)$. If there is no such $v$, then $u$ becomes the root of a new tree.
Each node $u$ in the forest is now associated with a set $B_u$ of all nodes on the path from $u$ to $v$ in the original tree $\cT$ (without $v$, but including $u$ itself). If $v$ does not exist (\ie, $u$ is a root in the new forest), then the associated set $B_u$ is defined as the set of nodes on the path from $u$ to the root $r(\cT)$ of the original tree (this time, including $r(\cT)$).

\begin{observation}
The forest $\cF$ consists solely of $3$-decreasing trees.% of height at most $D(\cT)$.
\end{observation}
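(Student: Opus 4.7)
\medskip

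\noindent\textbf{Proof proposal.} The plan is simply to unpack the definition of the parent relation in $\cF$ and compare it to the definition of a $3$-decreasing tree. Recall that a tree is $3$-decreasing precisely when, for every non-root node $u$ whose parent is $v$, we have $c(v)\ge 3c(u)$ (this is equivalent to the ``cost decreases by a factor of at least $3$ along any root-to-leaf path'' formulation, since each step along such a path is a parent-to-child step). Thus it suffices to verify that every parent-child edge of $\cF$ satisfies this inequality.

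First, I would fix an arbitrary tree $\cT'$ in $\cF$ and a node $u$ of $\cT'$ that is not the root of $\cT'$. By the construction, $u$ being a non-root in $\cF$ means that there \emph{does} exist a node $v$ on the path from $u$ to $r(\cT)$ in $\cT$ with $c(v)\ge 3c(u)$, and the parent of $u$ in $\cF$ is defined to be the \emph{first} such $v$ encountered when walking from $u$ toward $r(\cT)$. In particular, the parent $v$ of $u$ satisfies $c(v)\ge 3c(u)$ directly by the choice rule.

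Next, I would iterate this observation along any root-to-leaf path in $\cT'$: for each consecutive pair $(v_k,v_{k+1})$ on the path, with $v_k$ the parent of $v_{k+1}$ in $\cF$, the inequality $c(v_k)\ge 3c(v_{k+1})$ holds for the reason above. Hence the costs strictly decrease by a factor of at least $3$ at every step of the path, which is exactly the $3$-decreasing property. Since $\cT'$ was arbitrary, every tree of $\cF$ is $3$-decreasing.

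There is essentially no obstacle here; the statement is a direct consequence of how parents are chosen in the reduction. The only thing worth double-checking is the boundary case where $u$ is a root of some tree in $\cF$ (no parent exists, so nothing to verify) and the case where the parent $v$ happens to be $r(\cT)$ (the inequality $c(v)\ge 3c(u)$ still holds by construction, and no step of the path from $r(\cT')=v$ downward is affected). Thus the observation follows.
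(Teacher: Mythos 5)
Your proof is correct and takes essentially the same approach as the paper: both arguments simply note that, by construction, every non-root node $u$ of $\cF$ has a parent $v$ chosen precisely so that $c(v)\ge 3c(u)$. The paper states this in a single sentence; your proposal just spells out the equivalence between the pairwise parent-child inequality and the along-a-path formulation of the $3$-decreasing property, along with the trivial boundary cases.
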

\begin{proof}
By definition, if $u$ is a node of $\cF$ which is not a root of its tree, then the father $v$ of $u$ in $\cF$ obeys $c(v) \geq 3c(u)$. %Moreover, $v$ is also an ancestor of $u$ in $\cT$, and thus, no tree of $\cF$ can be higher than $\cT$.
\end{proof}

Assume that $\cF$ consists of $m$ trees $\cT^1, \cT^2, \dotsc, \cT^m$. Notice that $\cT$ and $\cF$ have the same set of nodes, and thus, each interval is naturally associated with a node in one of the trees in the forest. Let $\cI^i$ be the set of intervals associated with the nodes of $\cT^i$.

Our algorithm for general trees runs an independent instance of the algorithm for $3$-decreasing trees from Section~\ref{sec:algorithm_decreasing} on each tree $\cT^i$ with its corresponding set of intervals $\cI^i$. For convenience, we denote the algorithm from Section~\ref{sec:algorithm_decreasing} by $ALG$ from this point on.  Whenever an instance of $ALG$ chooses to transmit a subtree $T \subseteq \cT^i$, we transmit instead the tree $\bigcup_{u \in T} B_u$. It is useful to call $T$ the \emph{virtual} tree transmitted by $ALG$, and $\bigcup_{u \in T} B_u$ the \emph{concrete} tree transmitted by $ALG$. Observe that the concrete tree is always a subtree of the original tree $\cT$, and thus, this is a description of a valid algorithm for the original problem ({\OMAP} on general trees).

We next show that the total cost of the optimal solutions for all the subtrees of the forest $\cF$ is bounded by the optimal cost of the original instance.
Formally, let $OPT(\cT', \cI')$ denote the cost of the optimal solution for an instance $(\cT', \cI')$ of {\OMAP}. Then,

\begin{observation} \label{obs:opt_partitioning}
$OPT(\cT, \cI) \geq \sum_{i = 1}^m OPT(\cT^i, \cI^i)$.
\end{observation}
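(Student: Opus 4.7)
My plan is to fix an optimal solution for $(\cT, \cI)$ and, for each tree $\cT^i$ of the forest $\cF$, project every transmitted subtree onto $\cT^i$ to obtain a feasible solution for $(\cT^i, \cI^i)$. I will then show that the total cost of the $m$ projected solutions equals the cost of the original solution, which gives the inequality.

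First I would fix an optimal solution consisting of subtrees $T_1^*, T_2^*, \ldots$ transmitted at times $t_1^*, t_2^*, \ldots$, and for every $i$ and $j$ define the projection $\pi_i(T_j^*)$ to be the set of nodes of $T_j^*$ that also belong to $\cT^i$. The candidate solution for $(\cT^i, \cI^i)$ then transmits $\pi_i(T_j^*)$ at time $t_j^*$ whenever it is nonempty.

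Next I would verify three points. (i) \emph{$\pi_i(T_j^*)$ is a subtree of $\cT^i$ rooted at $r(\cT^i)$ whenever it is nonempty.} By construction of $\cF$, for every non-root node $u$ of $\cF$ the $\cF$-parent of $u$ is a strict ancestor of $u$ in $\cT$; iterating along the $\cT^i$-path from $u$ to $r(\cT^i)$, every $\cT^i$-ancestor of a node $u \in \cT^i$ is also a $\cT$-ancestor of $u$. Since $T_j^*$ is a subtree of $\cT$ rooted at $r(\cT)$, it is closed under $\cT$-ancestors, so if $u \in T_j^* \cap \cT^i$, the entire $\cT^i$-path from $u$ up to $r(\cT^i)$ lies in $T_j^* \cap \cT^i = \pi_i(T_j^*)$. (ii) \emph{Feasibility.} Any interval $I \in \cI^i$ is serviced in the original solution by some $T_j^*$ with $t_j^* \in [a_I, d_I]$ containing $w_I$; since $w_I$ belongs to $\cT^i$, it also lies in $\pi_i(T_j^*)$, so $I$ is serviced by the projected solution. (iii) \emph{Cost accounting.} Since $\cF$ is a forest on the same node set as $\cT$, the node sets of $\cT^1, \ldots, \cT^m$ partition the nodes of $\cT$, which gives $\sum_{i=1}^m c(\pi_i(T_j^*)) = c(T_j^*)$ for every fixed $j$. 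Summing over $j$ yields $\sum_{i=1}^m \sum_j c(\pi_i(T_j^*)) = OPT(\cT, \cI)$.

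Combining these three points, for each $i$ the collection $\{\pi_i(T_j^*)\}_j$ is a feasible solution for $(\cT^i, \cI^i)$, whose cost upper-bounds $OPT(\cT^i, \cI^i)$; summing over $i$ proves the observation. The main obstacle is establishing (i): it hinges on the fact that each $\cF$-edge traces a $\cT$-path going strictly upward toward $r(\cT)$, so that $\cT^i$-ancestry implies $\cT$-ancestry. Once this structural remark about $\cF$ is made precise, feasibility and cost accounting follow essentially by inspection.
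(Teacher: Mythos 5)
Your proof is correct and follows essentially the same route as the paper: project the optimal solution's transmissions onto each $\cT^i$ via intersection, observe that the projection is a valid subtree because $\cT^i$-ancestry implies $\cT$-ancestry (since each $\cF$-edge traces an upward $\cT$-path), and note that the costs add up exactly since the trees of $\cF$ partition the nodes of $\cT$. Your write-up is somewhat more explicit about feasibility and the root-containment of the projection, but the argument is identical in substance.
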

\begin{proof}
Consider the optimal solution for the instance $(\cT, \cI)$ of {\OMAP}. Based on this optimal solution we construct a solution for every one of the instances $\{(\cT^i, \cI^i)\}_{i = 1}^m$ as follows. Whenever the optimal solution for $(\cT, \cI)$ transmits a subtree $T^*$, the solution for the instance $(\cT^i, \cI^i)$ transmits the subtree $T^* \cap \cT^i$. One can verify that $T^* \cap \cT^i$ is indeed a tree since the set of nodes on the path connecting every node $u \in \cT^i$ to $r(\cT^i)$ is a subset of the set of nodes connecting $u$ to $r(\cT)$ in $\cT$.

We complete the proof by observing that the costs of the above solutions add up to exactly $OPT(\cT, \cI)$, and thus, the total cost of the optimal solutions for the instances $\{(\cT^i, \cI^i)\}_{i = 1}^m$ cannot exceed this value.
\end{proof}

The above observation implies that in order to analyze our algorithm it is enough to relate the cost of the concrete trees transmitted by each instance of $ALG$ to the cost of the optimal solution for the instance of {\OMAP} faced by this instance of $ALG$. Notice that this is slightly different from what we do in Section~\ref{sec:algorithm_decreasing} since in Section~\ref{sec:algorithm_decreasing} we relate the cost of the \emph{virtual} trees transmitted by an instance of $ALG$ to the cost of the optimal solution for the instance of {\OMAP} faced by this instance of $ALG$. Nevertheless, we show in the rest of this section that the arguments from Section~\ref{sec:algorithm_decreasing} can be used, almost without change, to prove also the more ambitious goal we need to prove here. More specifically, we prove the following proposition.

\begin{proposition} \label{prp:single_instance_claim}
For every $1 \leq i \leq m$, the total cost of the concrete trees transmitted by the instance of $ALG$ assigned to $\cT^i$ is at most $O(D(\cT)) \cdot OPT(\cT^i, \cI^i)$.
\end{proposition}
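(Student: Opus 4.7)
My plan is to prove two claims, analogous to Lemmas~\ref{lem:alg-pay} and~\ref{lem:opt-pay}, tailored to the concrete trees transmitted by the $i$-th instance of $ALG$. Because this instance runs the Section~\ref{sec:algorithm_decreasing} algorithm on the $3$-decreasing tree $\cT^i$ with interval set $\cI^i$, Lemma~\ref{lem:opt-pay} (applied with $r(\cT^i)$ in place of $r$) carries over verbatim and yields $OPT(\cT^i, \cI_j^i) \leq OPT(\cT^i, \cI_{j-1}^i) - c(r(\cT^i))$ for every $j$, where $\cI_j^i$ denotes the intervals of $\cI^i$ not yet serviced by this instance after $j$ transmissions. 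Letting $\ell$ denote the total number of transmissions made by the instance, this gives $\ell \cdot c(r(\cT^i)) \leq OPT(\cT^i, \cI^i)$. It therefore suffices to prove the concrete analog of Lemma~\ref{lem:alg-pay}: every concrete tree transmitted by the instance has cost at most $O(D(\cT)) \cdot c(r(\cT^i))$.

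For this analog, fix a single transmission and let $T \subseteq \cT^i$ be the associated virtual tree. For every $u \in T$, let $T_u$ denote the subtree of $T$ rooted at $u$, set $C_u = \bigcup_{v \in T_u} B_v$ (so that the transmitted concrete tree is $C_{r(\cT^i)}$), and define $f(u) = c(C_u) / \hat{c}(u)$. Because the blocks $\{B_v : v \in \cT^i\}$ are pairwise disjoint subsets of $\cT$, so are the sets $C_v$ over the $T$-children of $u$, which are exactly the elements of $\cA_u$. Hence $c(C_u) = c(B_u) + \sum_{v \in \cA_u} c(C_v)$, and dividing by $\hat{c}(u)$ while invoking the budget-conservation identity $\sum_{v \in \cA_u} \hat{c}(v) = \hat{c}(u)$ (derived in Section~\ref{ssc:alg-pay}) gives
\[
    f(u) = \frac{c(B_u)}{\hat{c}(u)} + \sum_{v \in \cA_u} \frac{\hat{c}(v)}{\hat{c}(u)} \cdot f(v) \leq \frac{c(B_u)}{\hat{c}(u)} + \max_{v \in \cA_u} f(v),
\]
since the second term is a convex combination of the $f(v)$'s. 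Moreover, by construction of $\cF$, every $w \in B_u$ satisfies $c(w) < 3c(u)$, and $c(u) \leq \hat{c}(u)$ by Lemma~\ref{lem:c_upper_bound}, so $c(B_u)/\hat{c}(u) \leq 3|B_u|$.

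Unrolling the recursion along the root-to-leaf path in $T$ that greedily maximizes $f$ at each step, say $r(\cT^i) = u_0, u_1, \ldots, u_k$, yields $f(r(\cT^i)) \leq 3 \sum_{j=0}^{k} |B_{u_j}|$. Since each $u_{j+1}$ is a $\cT^i$-descendant of $u_j$, the blocks $B_{u_0}, B_{u_1}, \ldots, B_{u_k}$ correspond to disjoint portions of the single $\cT$-path from $r(\cT)$ down to $u_k$, whose length is at most $D(\cT)+1$; hence $\sum_{j=0}^k |B_{u_j}| \leq D(\cT)+1$ and $f(r(\cT^i)) \leq 3(D(\cT)+1)$. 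Consequently $c(C_{r(\cT^i)}) = f(r(\cT^i)) \cdot \hat{c}(r(\cT^i)) \leq 6(D(\cT)+1) \cdot c(r(\cT^i))$. Summing this bound over the $\ell$ transmissions and combining with the opt-pay bound produces the required $O(D(\cT)) \cdot OPT(\cT^i, \cI^i)$ upper bound. The crux of the argument---which avoids the naive $O(D(\cT)^2)$ bound obtained by substituting $c(B_u) \leq 3|B_u| c(u)$ directly into Lemma~\ref{lem:alg-pay}---is the convex-combination structure furnished by budget conservation: it lets us trade the level-by-level summation (which would multiply $D(\cT)+1$ by $D(\cT^i)+1$) for an additive bound along a single root-to-leaf path, on which $\sum_j |B_{u_j}| \leq D(\cT)+1$.
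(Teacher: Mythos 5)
Your argument is essentially correct and takes a genuinely different route from the paper. The paper proves the analogue of Lemma~\ref{lem:alg-pay} by extending the budget $\hat{c}$ to every node of the concrete tree, assigning each such node a \emph{level} between $0$ and $D(\cT)$ via the auxiliary tree $T_\cA$ (whose children of $u$ are $\cA_u$) with each $u$ replaced by the path $B_u$, and then showing the per-level budget sum is at most $2c(r(\cT^i))$ (Lemma~\ref{lem:level_cost_general}). You instead set up a recursion on the normalized cost $f(u)=c(C_u)/\hat{c}(u)$, exploit the budget-conservation identity to see the recursive term as a convex combination, and unroll along the greedy $\arg\max$-path. This replaces the level-by-level accounting with a single root-to-leaf path argument, and the final block-disjointness observation along that path is exactly what caps the sum at $D(\cT)+1$. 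The two approaches reach the same $6(D(\cT)+1)\cdot c(r(\cT^i))$ bound; yours avoids defining levels for the auxiliary tree, which is a modest streamlining.

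There is one place where the writing, taken literally, breaks the argument and needs to be fixed. You define $T_u$ as ``the subtree of $T$ rooted at $u$'' and then assert both that $C_u = \bigcup_{v\in T_u} B_v$ satisfies $c(C_u) = c(B_u) + \sum_{v\in\cA_u} c(C_v)$ and that ``the $T$-children of $u$ are exactly the elements of $\cA_u$.'' Neither holds if $T_u$ is the ordinary $\cT^i$-subtree of $T$ below $u$. The nodes of $\cA_u$ need not be $\cT^i$-children of $u$ (they are arbitrary descendants added along paths to $w_I$), and conversely a node $w\in T$ that descends from $u$ in $\cT^i$ may have been added to $\cA_{u'}$ for an ancestor $u'$ of $u$ rather than to $\cA_u$ — for instance, if $u$ and $w$ are both placed in $\cA_r$ at the root's processing step while $w$ is a $\cT^i$-descendant of $u$, then $w\in T\cap\cT^i_u$ but $w\notin \{u\}\cup\bigcup_{v\in\cA_u}T_v$. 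So the recursive identity fails for your stated $T_u$. The fix is simply to define $T_u$ recursively by $T_u=\{u\}\cup\bigcup_{v\in\cA_u}T_v$ — i.e.\ the subtree rooted at $u$ in the paper's auxiliary tree $T_\cA$, not in $T$ itself. These $T_\cA$-subtrees are the objects that actually satisfy the identity (and are disjoint across $v\in\cA_u$), and $T_{r(\cT^i)}$ still equals all of $T$, so the concrete-tree cost is still $c(C_{r(\cT^i)})$. With that change, your unrolling along the greedy path in $T_\cA$ is sound, the $u_j$'s still form a $\cT$-chain, and the $\sum_j |B_{u_j}|\le D(\cT)+1$ argument goes through.
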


Clearly Theorem~\ref{thm:node_weighted_general} follows from Observation~\ref{obs:opt_partitioning} and Proposition~\ref{prp:single_instance_claim}. To prove Proposition~\ref{prp:single_instance_claim} we need to define some additional notation. Assume $ALG$ transmit $\ell^i$ trees when given $(\cT^i, \cI^i)$. For every $0 \leq j \leq \ell^i$, let $\cI^i_j$ be the set of intervals from $\cI^i$ that were not yet serviced by the algorithm after it has made $i$ transmissions. Note that $\cI^i_j$ includes also intervals that arrive after the $i$-th transmission, \ie, they were not active yet when this transmission was made. Using this notation we can state the following lemma, which is a counterpart of Lemma~\ref{lem:opt-pay}, and also follows from it.

\begin{lemma}\label{lem:opt-pay-general}
$OPT(\cI^i_j) \leq OPT(\cI^i_{j - 1}) - c(r(\cT^i))$ for every $1 \leq i \leq m$ and $1 \leq j \leq \ell^i$.
\end{lemma}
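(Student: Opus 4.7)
The plan is to observe that Lemma~\ref{lem:opt-pay-general} is essentially a restatement of Lemma~\ref{lem:opt-pay} applied to the particular execution of $ALG$ on the $3$-decreasing tree $\cT^i$, and therefore the proof reduces to a direct invocation of that lemma. First I would fix an index $i \in \{1, \ldots, m\}$ and focus on the instance $(\cT^i, \cI^i)$ of {\OMAP} that is handed to the $i$-th copy of $ALG$. Since the forest $\cF$ consists solely of $3$-decreasing trees, $\cT^i$ is a $3$-decreasing tree, so the entire setup and analysis of Section~\ref{sec:algorithm_decreasing} applies verbatim to this copy of $ALG$, with the role of $\cT$ played by $\cT^i$, the role of $r$ by $r(\cT^i)$, the role of $\ell$ by $\ell^i$, and the role of $\cI_j$ by $\cI^i_j$.

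Next I would apply Lemma~\ref{lem:opt-pay} within this setting. The lemma asserts $OPT(\cI_j) \leq OPT(\cI_{j-1}) - c(r)$ for every $1 \leq j \leq \ell$, where $OPT(\cI')$ denotes the optimal cost of the instance $(\cT, \cI')$ of {\OMAP}. Translating this statement into the notation of the general-tree section immediately yields the desired inequality $OPT(\cI^i_j) \leq OPT(\cI^i_{j-1}) - c(r(\cT^i))$ for every $1 \leq j \leq \ell^i$, where $OPT(\cI')$ is now interpreted as the optimal cost of the instance $(\cT^i, \cI')$, which is exactly how this symbol is being used in the statement of Lemma~\ref{lem:opt-pay-general}.

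I do not expect any substantive obstacle in this proof, since nothing in the argument of Lemma~\ref{lem:opt-pay} depends on the identity of the ambient tree beyond its being $3$-decreasing. The only mild bookkeeping item to verify is that the node-cost function used throughout Section~\ref{sec:algorithm_decreasing} is consistent with the costs inherited by $\cT^i$ from $\cT$: the construction of $\cF$ keeps the same node set as $\cT$ and does not alter node costs, so $c(r(\cT^i))$ means the same thing in both views. Once this is noted, Lemma~\ref{lem:opt-pay-general} follows by a one-line appeal to Lemma~\ref{lem:opt-pay}.
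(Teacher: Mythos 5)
Your proposal is correct and matches the paper exactly: the paper itself introduces Lemma~\ref{lem:opt-pay-general} as ``a counterpart of Lemma~\ref{lem:opt-pay}'' that ``also follows from it,'' giving no separate proof. Your observation that $\cT^i$ is a $3$-decreasing tree so Lemma~\ref{lem:opt-pay} applies verbatim after renaming is precisely the intended one-line argument.
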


To prove Proposition~\ref{prp:single_instance_claim} we also need the following counterpart of Lemma~\ref{lem:alg-pay}.

\begin{lemma}\label{lem:alg-pay-general}
If the instance of $ALG$ corresponding to $(\cT^i, \cI^i)$ transmits a virtual tree $T$, then $c\left(\bigcup_{u \in T} B_u\right) \leq \sum_{u \in T} c(B_u) \leq 6(D(\cT)+1) \cdot c(r(\cT^i))$.
\end{lemma}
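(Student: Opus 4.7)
The first inequality $c(\bigcup_{u \in T} B_u) \leq \sum_{u \in T} c(B_u)$ is immediate from subadditivity---a node of $\cT$ belonging to several of the $B_u$'s is counted once on the left but possibly multiple times on the right. The substance lies in the second inequality, and my plan is to bucket the right-hand side according to depth in the original tree $\cT$.

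For a node $u$ of $\cF$, write $v_u$ for its $\cF$-parent (with the convention that when $u$ is a $\cF$-root, $v_u$ is a virtual ``node above $r(\cT)$'' lying above every $\cT$-node). Each $B_u$ is a $\cT$-path running from $u$ up to but excluding $v_u$, and hence contains at most one node at each $\cT$-depth $\ell \in \{0,\dotsc,D(\cT)\}$; denote this node by $x_{u,\ell}$ when it exists, and set $S_\ell = \{u \in T : x_{u,\ell}\text{ exists}\}$. Swapping the order of summation,
\[
\sum_{u \in T} c(B_u) \;=\; \sum_{\ell=0}^{D(\cT)} \sum_{u \in S_\ell} c(x_{u,\ell}),
\]
so it suffices to show the inner sum is at most $6\,c(r(\cT^i))$ for every $\ell$; summing over the $D(\cT)+1$ depths then yields the claim.

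Three structural facts drive the inner-sum bound. (a) For every $x \in B_u$, $c(x) \leq 3c(u)$: the case $x=u$ is trivial, and for $x\neq u$ it follows from $v_u$ being the \emph{first} $\cT$-ancestor of $u$ with cost at least $3c(u)$, so any earlier $x$ has cost strictly less than $3c(u)$. (b) $S_\ell$ is an anti-chain in the virtual tree $T$: if $u$ were a strict $\cF$-ancestor of some $u'\in S_\ell$, then $u$ would coincide with $v_{u'}$ or lie strictly above it in $\cT$, so the $\cT$-depth of $u$ would be at most that of $v_{u'}$, which is strictly less than $\ell$, contradicting $u\in S_\ell$. (c) For every anti-chain $S\subseteq T$, $\sum_{u\in S}\hat{c}(u)\leq \hat{c}(r(\cT^i))=2\,c(r(\cT^i))$. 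Combining (a), (c), and Lemma~\ref{lem:c_upper_bound} ($c(u)\leq\hat{c}(u)$),
\[
\sum_{u \in S_\ell} c(x_{u,\ell}) \;\leq\; 3\sum_{u \in S_\ell} c(u) \;\leq\; 3\sum_{u \in S_\ell} \hat{c}(u) \;\leq\; 6\,c(r(\cT^i)).
\]

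The main obstacle is fact~(c), which strengthens Observation~\ref{obs:level_cost} from algorithmic level-sets of $T$ to arbitrary anti-chains. I plan to prove it by induction on the virtual tree: if $r(\cT^i)\in S$, then $S=\{r(\cT^i)\}$ and the bound is an equality; otherwise, partition $S$ across the subtrees rooted at the nodes of $\cA_{r(\cT^i)}$, apply the induction hypothesis to each part, and sum using the identity $\sum_{v\in\cA_u}\hat{c}(v)=\hat{c}(u)$ already noted in the paper. The anti-chain formulation is essential here because $S_\ell$ is defined via $\cT$-depth and may cut across algorithmic levels of the recursion in an unrelated way.
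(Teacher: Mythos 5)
Your proof is correct and takes a genuinely different route from the paper's. The paper extends $\hat{c}$ to the concrete tree $T' = \bigcup_{u\in T} B_u$, then defines a bespoke recursive level assignment on $T'$ (threading the nodes of each $B_v$ between the levels of $v$ and its algorithmic parent) and proves a per-level bound of $2c(r(\cT^i))$; summing over $D(\cT)+1$ levels gives the result. You instead bucket by the canonical $\cT$-depth and observe that, for each depth $\ell$, the set $S_\ell$ of nodes $u\in T$ contributing a $B_u$-node at that depth forms an anti-chain in the forest order. This forces you to strengthen Observation~\ref{obs:level_cost} from algorithmic level-sets to arbitrary anti-chains, but that strengthening is easy by the same induction along the recursion tree $T_\cA$ with children sets $\cA_u$ (and it is indeed the right partial order: $T_\cA$-ancestry implies $\cF$-ancestry, so an $\cF$-anti-chain is automatically a $T_\cA$-anti-chain, which is what the induction controls). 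What your version buys is a canonical level notion and cleaner bookkeeping---in particular, it transparently accounts for the nodes of $B_{r(\cT^i)}$ lying above $r(\cT^i)$ in $\cT$, which the paper's recursive level-assignment procedure as written does not explicitly assign levels to. What the paper's version buys is closer reuse of the machinery from Section~\ref{sec:algorithm_decreasing} (the level-sum bound is a direct analogue of Observation~\ref{obs:level_cost}) without needing the anti-chain generalization. Both yield the same constant $6(D(\cT)+1)$.

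One small presentational caution: you should state explicitly that fact~(c) is being proved with respect to the recursion-tree order (children $=\cA_u$), and then note that $S_\ell$, being an $\cF$-anti-chain, is a fortiori an anti-chain in that order; as written, ``anti-chain in the virtual tree $T$'' is ambiguous between these two orders, even though the argument is sound under the correct reading.
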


One can verify that Lemmata~\ref{lem:opt-pay-general} and~\ref{lem:alg-pay-general} imply Proposition~\ref{prp:single_instance_claim} in the same way that Lemmata~\ref{lem:alg-pay} and~\ref{lem:opt-pay} imply Theorem~\ref{thm:node_weighted_competitive}. Thus, it remains to prove Lemma~\ref{lem:alg-pay-general}. Recall that $ALG$ generates each one of the virtual trees it transmits by executing Algorithm~\ref{alg:online}, and consider the execution of Algorithm~\ref{alg:online} which generated the virtual tree $T$.

The sets $\{B_u\}_{u \in T}$ might not be disjoint. However, for the purposes of the proof it is useful to assume they are disjoint. In other words, if a node $v$ appears in several sets out of $\{B_u\}_{u \in T}$, we treat each one of its appearances as unique. Additionally, let us use the shorthand $T' = \bigcup_{u \in T} B_u$.

By Lemma~\ref{lem:c_upper_bound}, $c(u) \leq \hat{c}(u)$ for every node $u \in T$. Recall that $\hat{c}(u)$ is defined by Algorithm~\ref{alg:online} only for nodes $u \in T$. We now extend the definition of $\hat{c}$ to all the nodes of $T'$ by setting $\hat{c}(v) = \hat{c}(u)$ for every node $v \in B_u$. Since, by definition, the nodes of $B_u$ have costs of at most $3c(u)$, we get the following corollary.

\begin{corollary} \label{cor:c_chat_general}
For every node $u \in T'$, $c(u) \leq 3\hat{c}(u)$.
\end{corollary}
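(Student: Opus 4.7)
The plan is to derive the corollary as a short chain of two inequalities, since all the necessary ingredients have already been assembled in the paragraphs leading up to its statement. Fix an arbitrary node $u' \in T'$. By definition of $T' = \bigcup_{u \in T} B_u$, I can pick some $u \in T$ with $u' \in B_u$ (and in the disjoint-appearance convention adopted in the text, this $u$ is unique to the appearance of $u'$). The goal is then to bound $c(u')$ from above by $3c(u)$ and then push $c(u)$ up to $\hat{c}(u) = \hat{c}(u')$ using Lemma~\ref{lem:c_upper_bound}.

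The first ingredient is the bound $c(u') \leq 3c(u)$. This comes directly from the reduction in Section~\ref{sec:algorithm_general}: if $u$ has a parent $v$ in $\cF$, then $v$ is the first node on the path from $u$ toward $r(\cT)$ whose cost reaches $3c(u)$, so every node strictly below $v$ on that path---which is exactly $B_u$---has cost at most $3c(u)$, and in particular $c(u') \leq 3c(u)$. In the edge case that $u$ is a root of its tree in $\cF$, the set $B_u$ extends up to $r(\cT)$, but no such $v$ existing means the threshold $3c(u)$ is never reached along the entire path from $u$ to $r(\cT)$, so once again every node of $B_u$ has cost at most $3c(u)$. The second ingredient is Lemma~\ref{lem:c_upper_bound} applied to $u \in T$, which gives $c(u) \leq \hat{c}(u)$, together with the extension $\hat{c}(u') = \hat{c}(u)$ defined in the paragraph just above the corollary.

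Combining these ingredients yields
\[
   c(u') \;\leq\; 3c(u) \;\leq\; 3\hat{c}(u) \;=\; 3\hat{c}(u'),
\]
which is exactly the claim of Corollary~\ref{cor:c_chat_general}. There is no real obstacle here; the only point requiring a moment of care is the root-of-$\cF$ case in the reduction, and I would state it explicitly so the reader sees that the $3c(u)$ bound on nodes of $B_u$ is uniform across all $u \in T$.
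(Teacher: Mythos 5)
Your proof is correct and is essentially the paper's own argument: the paper also combines the reduction's guarantee that every node of $B_u$ has cost at most $3c(u)$ with Lemma~\ref{lem:c_upper_bound} ($c(u) \leq \hat{c}(u)$ for $u \in T$) and the extension $\hat{c}(v) = \hat{c}(u)$ for $v \in B_u$. Your explicit treatment of the case where $u$ is a root of its tree in $\cF$ is a minor elaboration of a step the paper leaves implicit, not a different route.
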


Next, define levels for the nodes of $T'$ using the following recursive procedure. The level of the root $r(\cT^i)$ is $0$. Consider now a node $u \in T$ which already has a level $\ell_u$. Then, for every node $v \in \cA_u$, the nodes of $B_v$ are assigned the levels $\ell_u + 1, \ell_u + 2, \dotsc, \ell_u + |B_v|$; where the last level is assigned to $v$ itself. The definition of $\cA_u$ guarantees that all the nodes of $B_v$ appear on the path from $u$ to $v$ in $\cT$, and thus, the difference between the level of $u$ and $v$ is at most the difference between their heights in $\cT$. Thus, no node of $T'$ is given a level larger than $D(\cT)$ by the above procedure.\footnote{There is an alternative way to define this level assignment which might help to understand the intuition behind it. Consider a tree $T_\cA$ defined as follows. The root of $T_\cA$ is $r(\cT^i)$, and the children of every node $u \in T_\cA$ are the nodes of $\cA_u$. One can verify that $T_\cA$ has exactly the same set of nodes as $T$. Moreover, the height of every node in $T_\cA$ corresponds to its level according to the level assignment used in Section~\ref{sec:algorithm_decreasing}. If we now replace every node $u \in T_\cA$ with a path consisting of the nodes of $B_u$ in which $u$ is the lowest node, then we get a new tree $\cT_\cB$. This time, this tree has the same set of nodes as $T'$, and the height of every node in $T_\cB$ corresponds to its level according to the level assignment used in this section.}

Recall that for every node $u \in T$ such that $\cA_u \neq \varnothing$, by construction,
\[
	\sum_{v \in \cA_u} \hat{c}(v)
	=
	\sum_{v \in \cA_u} \left[c(v) \cdot \frac{\hat{c}(u)}{c(\cA_u)}\right]
	=
	\hat{c}(u).
\]
Let us now define for a node $u \in T$ an additional set $\cB_u = \{w \in B_v \mid v \in \cA_u, \ell_w = \ell_u + 1\}$. In other words, $\cB_u$ is obtained from $\cA_u$ by replacing every node $v \in \cA_u$ with the single node $w \in B_v$ whose level is larger by $1$ than the level of $u$. Since $\hat{c}(w)$ is identical for every node of $B_v$ we immediately get also $\sum_{v \in \cB_u} \hat{c}(v) = \hat{c}(u)$ whenever $\cB_u \neq \varnothing$. For a node $u \in T' \setminus T$ we define $\cB_u = \{w \in B_v \mid \ell_w = \ell_u + 1 \wedge \exists_{v \in T} u,w \in B_v\}$. Informally, $\cB_u$ contains the single node that belongs to the same set $B_v$ as $u$ and has a level larger by $1$ than $u$. Again, the fact that $\hat{c}(w)$ is identical for every node of a single set $B_v$ guarantees $\sum_{v \in \cB_u} \hat{c}(v) = \hat{c}(u)$.

\begin{lemma} \label{lem:level_cost_general}
In each level $0 \leq \ell \leq D(\cT)$, the sum of $\hat{c}(u)$, taken over all nodes of level $\ell$, is at most $\hat{c}(r(\cT^i)) = 2c(r(\cT^i)).$
\end{lemma}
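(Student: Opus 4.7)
The plan is to prove Lemma~\ref{lem:level_cost_general} by induction on the level $\ell$, mimicking the structure of Observation~\ref{obs:level_cost} but carefully tracking how one level-$\ell$ node ``feeds'' the level-$(\ell+1)$ nodes via the $\cB_u$ sets. The base case $\ell = 0$ is immediate: only $r(\cT^i)$ has level $0$, and by definition $\hat{c}(r(\cT^i)) = 2c(r(\cT^i))$.

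For the inductive step, I would first establish the following decomposition claim: in the multi-set view of $T'$ introduced just before Corollary~\ref{cor:c_chat_general}, every node $w$ of level $\ell+1$ arises as an element of $\cB_u$ for exactly one level-$\ell$ node $u$. This is where the two-case definition of $\cB_u$ (one for $u \in T$, one for $u \in T' \setminus T$) is used. Given the recursive level assignment, such a $w$ falls into exactly one of the following: either $w$ is the topmost (in the original tree $\cT$) node of some $B_v$ with $v \in \cA_u$ for $u \in T$ at level $\ell$, in which case $w \in \cB_u$ by the first definition; or $w$ is not the top of its $B_v$, in which case there is a unique node $u' \in B_v$ immediately above $w$ in $\cT$, whose level is necessarily $\ell$ and for which $w \in \cB_{u'}$ by the second definition (with $u' \in T' \setminus T$, since the only node of $B_v$ lying in $T$ is $v$, which is the bottom of $B_v$). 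These two cases exhaust all possibilities and are mutually exclusive once each copy of $w$ is considered separately.

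Once the decomposition is in place, I would invoke the two identities stated in the paragraphs preceding the lemma, namely $\sum_{w \in \cB_u} \hat{c}(w) = \hat{c}(u)$ whenever $\cB_u \neq \varnothing$ (which holds both for $u \in T$ and for $u \in T' \setminus T$, since $\hat{c}$ is constant on each $B_v$). Summing over all level-$\ell$ nodes $u$, and using the decomposition to match each level-$(\ell+1)$ node to a unique $u$, yields
\[
   \sum_{w: \ell_w = \ell+1} \hat{c}(w)
   \;=\;
   \sum_{u: \ell_u = \ell,\, \cB_u \neq \varnothing} \hat{c}(u)
   \;\leq\;
   \sum_{u: \ell_u = \ell} \hat{c}(u)
   \;\leq\;
   2c(r(\cT^i)),
\]
where the last inequality is the induction hypothesis.

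The main obstacle I anticipate is the bookkeeping around the multi-set interpretation: verifying that when a node $v$ of $\cT$ appears inside several $B_u$'s, each of its copies is accounted for exactly once on both sides of the level decomposition, and that the definition of $\cB_u$ for $u \in T' \setminus T$ produces precisely one ``child'' per copy of $u$. Once this is laid out cleanly, the $\hat{c}$ identities do all the work and the induction closes routinely.
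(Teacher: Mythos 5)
Your proof is correct and follows essentially the same route as the paper: you combine the identity $\sum_{w\in\cB_u}\hat{c}(w)=\hat{c}(u)$ with the observation that, in the multi-set view of $T'$, every level-$(\ell+1)$ node lies in $\cB_u$ for exactly one level-$\ell$ node $u$, and close the argument by induction on $\ell$. The paper records the same two facts and then says only ``standard induction argument,'' whereas you additionally spell out the two-case analysis (topmost node of a $B_v$ versus interior node of a $B_v$) that justifies the unique-parent decomposition; this is a more explicit rendering of the identical argument.
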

\begin{proof}
We have seen that $\sum_{v \in \cB_u} \hat{c}(v) = \hat{c}(u)$ for every node $u \in T'$ unless $\cB_u = \varnothing$. Additionally, the definition of the sets $\cB_u$ guarantees that each node $v \in T'$ belongs to a single set $\cB_u$, and this set is associated with a node $u$ of a lower level than $v$. All this observations, taken together, imply the lemma by a standard induction argument.
\end{proof}

We now observe the following.
\begin{align*}
	c(T')
	={} &
	\sum_{u\in T'} c(u)
	\leq
	3 \cdot \sum_{u\in T'}\hat{c}(u)\\
	\leq{} &
	3 \cdot [2(D(\cT)+1) \cdot c(r(\cT^i))]\\
	={} &
	6(D(\cT) + 1) \cdot c(r(\cT^i))
	\enspace.
\end{align*}
The first inequality follows since Corollary~\ref{cor:c_chat_general} guarantees that $c(u)\leq 3\hat{c}(u)$ for every node $u \in T'$. The second inequality follows since there are at most $D(\cT)+1$ possible levels, and Lemma~\ref{lem:level_cost_general} shows that the sum of $\hat{c}(u)$ in each level is at most $2c(r(\cT^i))$.

\section{Conclusions}

In this paper we have presented an $O(D)$-competitive algorithm for the \textsf{Online Multi-level Aggregation Problem with Deadlines}. This result represents an exponential improvement over the previously best competitive ratio of $D^2 2^D$ given by~\cite{BBBCDF15}. Nevertheless, the competitive ratio of our algorithm is still quite far from the constant lower bounds proved by~\cite{BBCJNS14} and~\cite{BBBCDF15}. Narrowing this gap, either by providing an improved algorithm or by proving stronger lower bounds, is an intriguing open problem that we leave open.

%Another interesting question is to extend the results of this paper to the more general version of the problem in which each request accrues a {\em waiting cost} over time, rather than having a strict deadline. For this version of the problem the currently best competitive ratio is $O(D^4 2^D)$, which is still exponential in $D$~\cite{BBBCDF15}. We suspect that our result can be extended to this, more general, setting; however, the extension might require some loss in the competitive factor.

%\Mnote{* I would expect that the current result can also be extended to waiting times. It would be nice if the authors could comment on this (there is a range of possible situations: we are working on this / there is the following significant problem / we do not care).} 

\apptocmd{\sloppy}{\hbadness 10000\relax}{}{}
\bibliographystyle{plain}
\bibliography{Aggregation}

\end{document}